\newtheorem{lemma}{Lemma}
\newtheorem{theorem}[lemma]{Theorem}
\newtheorem{definition}[lemma]{Definition}
\newtheorem{conjecture}{Conjecture}
\date{}
\newcommand\PH{\bf{PH}}
\newcommand\BQP{\bf{BQP}}
\newcommand\IQP{\bf{IQP}}
\newcommand\BPP{\bf{BPP}}
\newcommand\PolyP{\bf{PP}}
\newcommand\NP{\bf{NP}}
\newcommand\co{\bf{co}}
\newcommand\MA{\bf{MA}}
\newcommand\AM{\bf{AM}}
\newcommand\Poly{\bf{P}}
\newcommand\sharpP{\bf{\# P}}
\newcommand\Per{{\bf Permanent}}
\newcommand\poly{\sl{poly}}
\newcommand\var{{\Var[Q(X)]}}
\newcommand\distone {\ensuremath{\mathcal{D}_{Q,\ell}}}
\newcommand\disttwo {\ensuremath{\mathcal{D}_{Q,k}}}
\newcommand\rangeovern {\ensuremath{n>0}}
\newcommand\orbitsize {\binom{k}{\psi(\phi(x)_1)}\binom{k}{\psi(\phi(x)_2)}...\binom{k}{\psi(\phi(x)_n)}}
\newcommand\neworbitsize {\binom{k}{\psi(y_1)}\binom{k}{\psi(y_2)}...\binom{k}{\psi(y_n)}}
\def\Var{{\rm Var}\,}
     \def\E{{\rm E}\,}
\renewcommand{\leq}{\leqslant}
\renewcommand{\geq}{\geqslant}
\title{On the Power of Quantum Fourier Sampling}
\author{Bill Fefferman\thanks{Joint Center for Quantum Information and Computer Science, University of Maryland/NIST and California Institute of Technology, supported in part by NSF CCF-1423544  and BSF grant 2010120.}
\and Chris Umans\thanks{California Institute of Technology, supported by NSF CCF-1423544  and BSF grant 2010120.}}
\begin{document}

\maketitle
\setcounter{page}{0}
\thispagestyle{empty}

\begin{abstract}


A line of work initiated by Terhal and DiVincenzo \cite{di-terhal} and Bremner, Jozsa, and Shepherd \cite{iqpsampling}, shows that restricted classes of quantum computation can efficiently sample from probability distributions that cannot be exactly sampled efficiently on a classical computer, unless the $\PH$ collapses.  Aaronson and Arkhipov \cite{boson} take this further by considering a distribution that can be sampled efficiently by linear optical quantum computation, that under two feasible conjectures, cannot even be approximately sampled classically within bounded total variation distance, unless the $\PH$ collapses.  

In this work we use Quantum Fourier Sampling to construct a class of distributions that can be sampled exactly by a quantum computer.  We then argue that these distributions cannot be approximately sampled classically, unless the $\PH$ collapses, under variants of the Aaronson-Arkhipov conjectures.

In particular, we show a general class of quantumly sampleable distributions each of which is based on an ``Efficiently Specifiable" polynomial, for which a classical approximate sampler implies an average-case approximation.  This class of polynomials contains the Permanent but also includes, for example, the Hamiltonian Cycle polynomial, as well as many other familiar $\sharpP$-hard polynomials.

Since our distribution likely requires the full power of universal quantum computation, while the Aaronson-Arkhipov distribution uses only linear optical quantum computation with noninteracting bosons, why is this result interesting?  We can think of at least three reasons:
\begin{enumerate}
\item{Since the conjectures required in \cite{boson} have not yet been proven, it seems worthwhile to weaken them as much as possible.  We do this in two ways, by weakening both conjectures to apply to any ``Efficiently Specifiable" polynomial, and by weakening the so-called Anti-Concentration conjecture so that it need only hold for one distribution in a broad class of distributions.}

\item{Our construction can be understood without any knowledge of linear optics.  While this may be a disadvantage for experimentalists, in our opinion it results in a very clean and simple exposition that may be more immediately accessible to computer scientists.}

\item{It is extremely common for quantum computations to employ ``Quantum Fourier Sampling" in the following way: first apply a classically efficient function to a uniform superposition of inputs, then apply a Quantum Fourier Transform followed by a measurement. Our distributions are obtained in exactly this way, where the classically efficient function is related to a (presumed) hard polynomial. Establishing rigorously a robust sense in which the central primitive of Quantum Fourier Sampling is classically hard seems a worthwhile goal in itself.}
\end{enumerate}
\end{abstract}

\newpage
\section{Introduction}
Nearly twenty years after the discovery of Shor's factoring algorithm \cite{shor} that caused an explosion of interest in quantum computation, the complexity theoretic classification of quantum computation remains embarrassingly unsettled.

The foundational results of Bernstein and Vazirani \cite{bv}, Adleman, DeMarrais, and Huang \cite{bqppp}, and Bennett, Bernstein, Brassard and Vazarani \cite{bbbv} laid the groundwork for quantum complexity theory by defining $\BQP$ as the class of problems solvable with a quantum computer in polynomial time, and established the upper bound, $\BQP\subseteq\PolyP$, which hasn't been improved since.

In particular, given that $\BPP\subseteq\BQP$, so quantum computers are surely no less powerful than their classical counterparts, it is natural to compare the power of efficient quantum computation to the power of efficient classical verification.  Can every problem with an efficient quantum algorithm be verified efficiently?  Likewise can every problem whose solution can be verified efficiently be solved quantumly?  In complexity theoretic terms, is $\BQP\subseteq\NP$, and is $\NP\subseteq\BQP$?  Factoring is contained in $\NP\cap\co\NP$, and so cannot be $\NP$-hard unless $\NP=\co\NP$ and the $\PH$ collapses.  Thus, while being a problem of profound practical importance, Shor's algorithm does not give evidence that $\NP\subseteq\BQP$.

Even progress towards oracle separations has been agonizingly slow.  These same works that defined $\BQP$ established an oracle for which $\NP\not\subset\BQP$ \cite{bbbv} and $\BQP\not\subset\NP$ \cite{bv}.  This last result can be improved to show an oracle relative to which $\BQP\not\subset\MA$ \cite{bv}, but even finding an oracle relative to which $\BQP\not\not\subset\AM$ is still wide open.  This is particularly troubling given that, under widely believed complexity assumptions, $\NP=\MA=\AM$ \cite{klivans}.  Thus, our failure to provide an oracle relative to which $\BQP\not\subset\AM$ indicates a massive lack of understanding of the classical power of quantum computation.

Recently, two candidate oracle problems with quantum algorithms have been proven to not be contained in the $\PH$, assuming plausible complexity theoretic conjectures \cite{Scott,mine}.\footnote{Although the ``Generalized Linial-Nisan" conjecture proposed in \cite{Scott} is now known to be false \cite{A2010}.}  These advances remain at the forefront of progress on these questions.

A line of work initiated by DiVincenzo and Terhal \cite{di-terhal}, Bremner, Jozsa and Shepherd \cite{iqpsampling}, and Aaronson and Arkhipov \cite{boson} asks whether we can provide a theoretical basis for quantum superiority by looking at {\sl distribution sampling problems}.  In particular, Aaronson and Arkhipov show a {\sl distribution} that can be sampled efficiently by a particular limited form of quantum computation, that assuming the validity of two feasible conjectures, cannot be approximately sampled classically (even by a randomized algorithm with a $\PH$ oracle), unless the $\PH$ collapses.  The equivalent result for decision problems, establishing $\BQP\not\subset\BPP$ unless the $\PH$ collapses, would be a crowning achievement in quantum complexity theory.  In addition, this research has been very popular not only with the theoretical community, but also with experimentalists who hope to perform this task, ``Boson Sampling", in their labs.  Experimentally, it seems more pressing to analyze the hardness of approximate quantum sampling, since it is unreasonable to expect that any physical realization of a quantum computer can {\sl itself} exactly sample from the quantum distribution.

Interestingly, it is also known that if we can find such a quantumly sampleable distribution for which no classical approximate sampler exists, there exists a ``search" problem that can be solved by a quantum computer that cannot be solved classically \cite{samplingandsearching}.  In a search problem we are given an input $x\in\{0,1\}^n$, and our goal is to output an element in a nonempty set, $A_x\subseteq\{0,1\}^{\poly(n)}$ with high probability.  This would be one of the strongest pieces of evidence to date that quantum computers can outperform their classical counterparts.

In this work we use the same general algorithmic framework used in many quantum algorithms, which we refer to as ``Quantum Fourier Sampling", to demonstrate the existence of a general class of distributions that can be sampled exactly by a quantum computer.  We then argue that these distributions shouldn't be able to be approximately sampled classically, unless the $\PH$ collapses.  Perhaps surprisingly, we obtain and generalize many of the same conclusions as Aaronson and Arkhipov \cite{boson} with a completely different class of distributions.

Additionally, recently, and independent of us, an exciting result by Bremner, Montanaro and Shepherd \cite{newashley} obtains similar quantum ``approximate sampling" results under related but different conjectures.  While our hardness conjectures apply to a broad class of hard ``polynomials", their distribution can be sampled by a class of commuting quantum computations known as Instantaneous Quantum Polynomial time, or $\IQP$, whereas our results likely require the full power of universal quantum computation.
\section{Overview}
\subsection{Our Goals}
We want to find a class of distributions that can be sampled quantumly that cannot be approximately sampled classically, unless the $\PH$ collapses.  A natural methodology toward showing this is to prove that the existence of a classical approximate sampler implies that a $\sharpP$-hard function can be computed in the $\PH$.  By Toda's Theorem \cite{toda}, this would imply a collapse of the $\PH$.

In this work, we demonstrate a class of distributions that can be sampled exactly on a quantum computer.  We prove that the existence of an approximate sampler for these distributions implies an approximate average case solution to an ``Efficiently Specifiable" polynomial.  An Efficiently Specifiable polynomial is informally a polynomial in which the variables in each monomial can be computed efficiently from the index of the monomial.  This includes, among others, the Permanent and Hamiltonian Cycle polynomial.  

Computing a multiplicative approximation to the Permanent with integer entries in the worst-case is $\sharpP$-hard, and computing the Permanent on average  is $\sharpP$-hard (see \cite{boson} for more details).  The challenge to proving our conjectures is to put these two together to prove that an average-case multiplicative approximation to the Permanent (or for that matter, any Efficiently Specifiable polynomial) is still a $\sharpP$-hard problem.  Since we can't prove these conjectures, and we don't know the ingredients such a proof will require, it seems worthwhile to attempt to generalize the class of distributions that can be sampled quantumly.

\subsection{Our Results}
In Section \ref{algorithm} we define a general class of distributions that can be sampled exactly on a quantum computer.  The probabilities in these distributions are proportional to each different $\{\pm 1\}^n$ evaluation of a particular {\sl Efficiently Specifiable} polynomial (see Definition \ref{efficientlyspecifiable}) with $n$ variables.  We then show in Section \ref{sec-classical-nonsquash} that the existence of an approximate classical sampler for these distributions implies the existence of an {\sl additive approximate average-case} solution to the Efficiently Specifiable polynomial.  We generalize this in Section \ref{sec-non-squash} to prove that quantum computers can sample from a class of distributions in which each probability is proportional to polynomially bounded integer evaluations of an Efficiently Specifiable polynomial.

We then attempt to extend this result to quantumly sample from a distribution with probabilities proportional to exponentially bounded integer evaluations of Efficiently Specifiable polynomials.  To do this, in Section \ref{sec-squashquantum}, we introduce a variant of the Quantum Fourier Transform which we call the ``Squashed QFT".  We explicitly construct this unitary operator, and show how to use it in our quantum sampling framework.  We leave as an open question whether this unitary can be realized by an efficient quantum circuit.  We then prove in Section \ref{sec-classicalsquash}, using a similar argument to Section \ref{sec-classical-nonsquash}, that if we had a classical approximate sampler for this distribution we'd have an {\sl additive approximate average-case} solution to the Efficiently Specifiable polynomial with respect to the binomial distribution over exponentially bounded integers.

In Section \ref{sec-summary} we conclude with conjectures needed to establish the intractability of approximate classical sampling from any of our quantumly sampleable distributions.  As shown in Sections \ref{sec-classical-nonsquash} and \ref{sec-non-squash} it suffices to prove that an {\sl additive approximate average-case solution} to any Efficiently Specifiable polynomial is $\sharpP$-hard, and we conjecture that this is possible.  We also propose an ``Anti-concentration conjecture" relative to an Efficiently Specifiable polynomial over the binomial distribution, which allows us to reduce the hardness of a {\sl multiplicative approximate average-case} solution to an {\sl additive approximate average-case solution}.  Assuming this second conjecture, we can then base our first conjecture around the hardness of {\sl multiplicative}, rather than {\sl additive approximate average-case solutions} to an Efficiently Specifiable polynomial.

Our conjectures generalize conjectures in Aaronson and Arkhipov's results \cite{boson}.  They conjecture that an {\sl additive approximate average-case solution} to the Permanent with respect to the Gaussian distribution with mean $0$ and variance $1$ is $\sharpP$-hard.  They further propose an ``Anti-concentration" conjecture which allows them to reduce the hardness of {\sl multiplicative approximate average-case solutions} to the Permanent over the Gaussian distribution to the hardness of {\sl additive average case solutions} to the Permanent over the Gaussian distribution.  The parameters of our conjectures match the parameters of theirs, but our conjectures are broader, applying to any Efficiently Specifiable polynomial, a class which includes the Permanent, and a wider class of distributions, and thus is formally easier to prove.

\section{Quantum Preliminaries}
\label{sec:preliminaries}

In this section we cover the basic priciples of quantum computing needed to understand the content in the paper.  For a much more complete overview there are many references available, e.g., \cite{kitaev,nielsen}.  

The state of an {\sl $n$-qubit quantum system} is described by a unit vector in $\mathcal{H}=(\mathbb{C}^{2})^{\otimes n}$, a $2^n$-dimensional complex Hilbert space.  As per the literature we will denote the standard orthogonal basis vectors of $\mathcal{H}$ by $\{|v\rangle\}$ for $v\in\{0,1\}^n$.      

In accordance with the laws of quantum mechanics, transformations of states are described by unitary transformations acting on $\mathcal{H}$, where a {\sl unitary transformation} over $\mathcal{H}$ is a linear transformation specified by a $2^n \: \times \: 2^n$ square complex matrix $U$, such that $UU^{*} = I$, where $U^{*}$ is the conjugate transpose. Equivalently, the rows (and columns) of $U$ form an orthonormal basis.  A {\em local} unitary is a unitary that operates only on $b = O(1)$ qubits; i.e. after a suitable renaming of the standard basis by reordering qubits, it is the matrix $U \otimes I_{2^{n-b}}$, where $U$ is a $2^b \times 2^b$ unitary $U$. A local unitary can be applied in a single step of a Quantum Computer. A {\em local decomposition} of a unitary is a factorization into local unitaries. We say a $2^n \times 2^n$ unitary is {\em efficiently quantumly computable} if this factorization has at most $\poly(n)$ factors.  



 
We will need the concept of quantum evaluation of an efficiently classically computable function $f:\{0,1\}^n \rightarrow \{0,1\}^m$, which in one quantum query to $f$ maps:
$$\sum\limits_{x\in\{0,1\}^n}|x\rangle|z\rangle \rightarrow \sum\limits_{x\in\{0,1\}^n}|x\rangle|z \oplus f(x)\rangle$$  

Note that this is a unitary map, as applying it again inverts the procedure, and can be done efficiently as long as $f$ is efficiently computable.

Assuming $f$ is $\{0,1\}$-valued, we can use this state together with a simple phase flip unitary gate to prepare:
$$\sum\limits_{x\in\{0,1\}^n}\left( -1 \right)^{f(x)}|x\rangle|f(x)\rangle$$
And one more quantum query to $f$, which ``uncomputes" it, and allows us to obtain the state $\sum\limits_{x\in\{0,1\}^n} (-1)^{f(x)}|x\rangle$.
 Equivalently, if the efficiently computable function is $f:\{0,1\}\rightarrow\{\pm 1\}$ we can think of this as a procedure to prepare:

$$\sum\limits_{x\in\{0,1\}^n}f(x)|x\rangle$$ 

With two quantum queries to the function $f$. 

We close this section with an additional lemma needed for our quantum sampler.

\begin{lemma}\label{uniformfunction} Let $h:[m]\rightarrow \{0,1\}^n$ be an efficiently computable one-to-one function, 
and suppose its inverse can also be efficiently computed.
Then the superposition $\frac{1}{\sqrt{m}}\sum\limits_{x\in[m]} |h(x)\rangle$ can be efficiently prepared by a
 quantum algorithm.\end{lemma}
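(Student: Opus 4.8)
The plan is the familiar ``compute, then uncompute'' pattern: first create a uniform superposition over the domain $[m]$; then apply the quantum evaluation of $h$ to obtain a superposition over the pairs $|x\rangle|h(x)\rangle$; then erase the domain register by ``uncomputing'' $x$ from $h(x)$, which is possible precisely because $h$ is injective with an efficiently computable inverse. What remains is $|0\rangle$ tensored with the desired state $\tfrac1{\sqrt m}\sum_{x\in[m]}|h(x)\rangle$, and we simply ignore the first register.

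In more detail, identify $[m]$ with $\{0,1,\dots,m-1\}$ and set $k=\lceil\log_2 m\rceil$; since $h$ is one-to-one we have $m\le 2^n$, so $k\le n$ and all the registers below have size $\poly(n)$. (i) Prepare $\tfrac1{\sqrt m}\sum_{x\in[m]}|x\rangle$ on $k$ qubits. (ii) Append an $n$-qubit register in state $|0^n\rangle$ and apply one quantum query to $h$ (extended arbitrarily to all of $\{0,1\}^k$), using the map $|x\rangle|z\rangle\mapsto|x\rangle|z\oplus h(x)\rangle$ from the preliminaries; this is efficient since $h$ is efficiently computable and yields $\tfrac1{\sqrt m}\sum_{x\in[m]}|x\rangle|h(x)\rangle$. (iii) Apply the unitary $W\colon|a\rangle|y\rangle\mapsto|a\oplus h^{-1}(y)\rangle|y\rangle$, where $h^{-1}\colon\{0,1\}^n\to\{0,1\}^k$ is any efficiently computable function extending the inverse of $h$ from the image of $h$ to all of $\{0,1\}^n$ (its values off the image are irrelevant). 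This $W$ is unitary for \emph{any} such function -- indeed $W^2=I$ -- and it is efficient because $h^{-1}$ is; on the support of our state $y=h(x)$ with $h^{-1}(h(x))=x$, so the first register is sent to $|x\oplus x\rangle=|0\rangle$. The state becomes $|0\rangle\otimes\tfrac1{\sqrt m}\sum_{x\in[m]}|h(x)\rangle$, and discarding the (now unentangled) first register completes the construction.

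The only step that is not entirely routine is (i), preparing a uniform superposition over a domain whose size need not be a power of two; this is where I expect the real work to be. When $m$ is a power of $2$ a layer of Hadamards suffices, and in general one builds the superposition bit by bit: rotate the top qubit to $\sqrt{2^{k-1}/m}\,|0\rangle+\sqrt{(m-2^{k-1})/m}\,|1\rangle$, then on the top-bit-$0$ branch apply Hadamards to the remaining $k-1$ qubits (a full uniform superposition over $\{0,\dots,2^{k-1}-1\}$) and on the top-bit-$1$ branch recurse to build a uniform superposition over $\{0,\dots,m-2^{k-1}-1\}$. This costs $O(k)$ (multiply-)controlled rotations, each of which decomposes -- using $O(k)$ ancillas and Toffoli gates that are uncomputed afterward -- into $\poly(k)=\poly(n)$ local unitaries, and the preparation is exact in the model of this paper, where arbitrary $O(1)$-qubit gates are allowed as single steps. (If one insists on a fixed universal gate set, the same circuit prepares the target state to within inverse-exponential trace distance, which is more than enough for our purposes.) Everything else is the standard injectivity-based uncomputation, so the proof reduces to assembling these pieces.
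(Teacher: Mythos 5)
Your proof follows exactly the same compute-then-uncompute strategy as the paper: prepare the uniform superposition over $[m]$, one quantum query to $h$ to write $h(x)$ into a fresh register, one quantum query to $h^{-1}$ to erase the input register, then discard. The only difference is that you spell out the (routine but genuinely necessary) sub-step of preparing a uniform superposition over a domain whose size is not a power of two, and you make explicit why the $h^{-1}$-query is a well-defined unitary even off the image of $h$; the paper leaves both of these implicit.
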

 \begin{proof}
 Our quantum procedure with two quantum registers proceeds as follows:
 \begin{enumerate}
 \item{Prepare $ \frac{1}{\sqrt{m}}\sum\limits_{x\in{[m]}}|x\rangle |00...0\rangle$}
 \item{Query $h$ using the first register as input and the second as output: $$\frac{1}{\sqrt{m}}\sum\limits_{x\in [m]}|x\rangle |h(x)\rangle$$}
 \item{Query $h^{-1}$ using the second register as input and the first as output: $$\frac{1}{\sqrt{m}}\sum\limits_{x\in[m]}|x\oplus h^{-1}(h(x))\rangle|h(x)\rangle=\frac{1}{\sqrt{m}}\sum\limits_{x\in[m]}|00...0\rangle|h(x)\rangle$$}
 \item{Discard first register}
 \end{enumerate}
 \end{proof}

\section{Efficiently Specifiable Polynomial Sampling on a Quantum Computer}\label{algorithm}
In this section we describe a general class of distributions that can be sampled efficiently on a Quantum Computer.

\begin{definition}[Efficiently Specifiable Polynomial]\label{efficientlyspecifiable} We say a multilinear homogenous $n$-variate polynomial $Q$ with coefficients in $\{0,1\}$ and $m$ monomials is {\sl Efficiently Specifiable} via an efficiently computable, one-to-one function $h:[m]\rightarrow\{0,1\}^n$, with an efficiently computable inverse, if:
$$Q(X_1,X_2...,X_n)=\sum\limits_{z\in [m]} {X_{1}}^{h(z)_1}{X_2}^{h(z)_2}...{X_n}^{h(z)_n}$$
\end{definition}

\begin{definition}[$\distone$]Suppose $Q$ is an Efficiently Specifiable polynomial with $m$ monomials.  For fixed $Q$ and $\ell$, we define the class of distributions
 $\distone$ over $\ell$-ary strings $y\in [0,\ell-1]^{n}$ given by:

$$\Pr_{\distone}[y]=\frac{\left| Q(Z_{y})\right|^2}{\ell^{n}m}$$
 
 Where $Z_{y}\in\mathbb{T}_{\ell}^n$ is a vector of complex values encoded by the string $y$. \end{definition}
 \begin{theorem}[Quantum Sampling Theorem]\label{essampling} Given an Efficiently Specifiable polynomial, $Q$ with $n$ variables, $m$ monomials, relative to a function $h$, and $\ell \leq exp(n)$, the resulting $\distone$ can be sampled in $poly(n)$ time on a Quantum Computer.
\end{theorem}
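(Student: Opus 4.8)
The plan is to present \distone\ as the computational-basis measurement distribution of an explicit $\poly(n)$-qubit circuit with three stages: prepare a uniform superposition over the monomials of $Q$, apply a tensor power of the quantum Fourier transform modulo $\ell$, and measure. The crux is that the ``Efficiently Specifiable'' structure of $Q$ makes the amplitude on $|y\rangle$ after the Fourier stage equal exactly $Q(Z_y)$, up to the normalization already present in Definition~\ref{efficientlyspecifiable}.

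\emph{Step 1 (monomial superposition).} Apply Lemma~\ref{uniformfunction} to the function $h\colon[m]\to\zo^n$ of Definition~\ref{efficientlyspecifiable} --- which is precisely an efficiently computable, efficiently invertible injection --- to prepare in $\poly(n)$ time the unit vector
\[
  |\psi_0\rangle \;=\; \frac{1}{\sqrt m}\sum_{z\in[m]}|h(z)\rangle \;\in\;(\mathbb{C}^2)^{\otimes n},
\]
supported on the $m$ distinct exponent vectors of the monomials of $Q$.

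\emph{Step 2 (Fourier transform).} Regard each of the $n$ qubits as a $\{0,1\}$-valued element of $\mathbb{Z}_\ell$ padded into a register of $\lceil\log_2\ell\rceil=\poly(n)$ qubits, and apply $F_\ell^{\otimes n}$, where $F_\ell\colon|a\rangle\mapsto \ell^{-1/2}\sum_{b\in\mathbb{Z}_\ell}\omega_\ell^{ab}|b\rangle$ with $\omega_\ell=e^{2\pi i/\ell}$. Expanding the tensor product, using $(Z_y)_j^{h(z)_j}=\omega_\ell^{h(z)_j y_j}$ for the bits $h(z)_j$, and then invoking the defining identity of an Efficiently Specifiable polynomial, one obtains
\[
  F_\ell^{\otimes n}|\psi_0\rangle \;=\; \frac{1}{\sqrt{\ell^n m}}\sum_{y\in[0,\ell-1]^n}\Bigl(\sum_{z\in[m]}\prod_{j=1}^{n}\omega_\ell^{h(z)_j y_j}\Bigr)|y\rangle \;=\; \frac{1}{\sqrt{\ell^n m}}\sum_{y\in[0,\ell-1]^n}Q(Z_y)\,|y\rangle,
\]
where $Z_y=(\omega_\ell^{y_1},\dots,\omega_\ell^{y_n})\in\mathbb{T}_\ell^n$ is the vector encoded by $y$.

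\emph{Step 3 (measurement and bookkeeping).} Measuring all registers in the computational basis returns $y$ with probability $|Q(Z_y)|^2/(\ell^n m)=\Pr_{\distone}[y]$, so the output distribution is exactly \distone. Normalization comes for free: $|\psi_0\rangle$ is a unit vector since $h$ is injective and $F_\ell^{\otimes n}$ is unitary, which incidentally re-proves $\sum_y|Q(Z_y)|^2=\ell^n m$ via Parseval. For the running time, Step 1 is $\poly(n)$ by Lemma~\ref{uniformfunction}, each register carries $\poly(n)$ qubits because $\ell\le\exp(n)$, and the measurement is trivial. The single delicate ingredient --- the step I expect to need the most care --- is implementing $F_\ell$ efficiently for an arbitrary modulus $\ell\le\exp(n)$: for $\ell$ a power of two this is the textbook QFT circuit, but for general $\ell$ one appeals to the standard fact that $F_\ell$ admits a $\poly(\log\ell)$-gate circuit approximating it to within $2^{-\poly(n)}$ in operator norm (e.g.\ via phase estimation), which perturbs the sampled distribution by at most $2^{-\poly(n)}$ in total variation distance. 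I would state the theorem with this understanding (exact for well-behaved $\ell$, negligibly approximate in general) and note that the reductions in later sections tolerate such an error; everything else is a one-line verification once the Efficiently Specifiable identity is substituted.
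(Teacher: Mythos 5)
Your proposal is correct and is essentially the paper's own proof: prepare the monomial superposition $\frac{1}{\sqrt m}\sum_{z\in[m]}|h(z)\rangle$ via Lemma~\ref{uniformfunction}, apply the QFT over $\mathbb{Z}_\ell^n$, and read off the amplitude $\frac{1}{\sqrt{\ell^n m}}\sum_z \omega_\ell^{\langle y, h(z)\rangle} = \frac{1}{\sqrt{\ell^n m}}\,Q(Z_y)$. The only addition you make is the remark on implementing $F_\ell$ for a general modulus $\ell$, which is a reasonable point of care (and in fact exact polynomial-size circuits for $F_\ell$ over arbitrary $\ell$ are known, e.g.\ Mosca--Zalka, so the statement can be kept exact), but it does not change the structure of the argument, which matches the paper's.
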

  \begin{proof}
~\begin{enumerate}
 \item{We start in a uniform superposition  $\frac{1}{\sqrt{m}}\sum\limits_{z\in [m]}|z\rangle$.}
 \item{We then apply Lemma \ref{uniformfunction} to prepare $\frac{1}{\sqrt{m}}\sum\limits_{z\in [m]} |h(z)\rangle$.}
 \item{Apply Quantum Fourier Transform over $\mathbb{Z}^n_\ell$ to attain \\$\frac{1}{\sqrt{\ell^{n}m}}\sum\limits_{y\in[0,\ell-1]^{n}}\sum\limits_{z\in[m]} \omega_\ell^{<y,h(z)>}|y\rangle$\\}
 \end{enumerate}

 Notice that the amplitude of each $y$ basis state in the final state after Step 3 is proportional to the value of $Q(Z_{y})$, as desired.
 
 \end{proof}
 
 \section{Classical Hardness of Efficiently Specifiable Polynomial Sampling}\label{sec-classical-nonsquash}
 We are interested in demonstrating the existence of some distribution that can be sampled exactly by a uniform family of quantum circuits, that cannot be sampled approximately classically.  Approximate here means close in Total Variation distance, where we denote the Total Variation distance between two distributions $X$ and $Y$ by $\|X-Y\|$.  Thus we define the notion of a Sampler to be a classical algorithm that approximately samples from a given class of distributions:

\begin{definition}[Sampler] Let $\{D_n\}_{\rangeovern}$ be a class of distributions where each $D_n$ is distributed over $\mathbb{C}^{n}$.  Let $r(n)\in\poly(n),\epsilon(n) \in 1/poly(n)$.  We say $S$ is a $Sampler$ with respect to $\{D_n\}$ if $\|S(0^{n},x\sim U_{\{0,1\}^{r(n)}},0^{1/{\epsilon(n)}})-D_n\|\leq \epsilon(n)$ in (classical) polynomial time.    
\end{definition}

We first recall a theorem due to Stockmeyer \cite{stockmeyer} on the ability to ``approximate count" in the $\PH$.
\begin{theorem}[Stockmeyer \cite{stockmeyer}]\label{stockmeyer}
Given as input a function $f:\{0,1\}^n\rightarrow\{0,1\}^m$ and $y\in\{0,1\}^m$, there is a procedure that outputs $\alpha$ such that:
$$(1-\epsilon)\Pr_{x\sim U_{\{0,1\}^n}}[f(x)=y]\leq \alpha \leq (1+\epsilon)\Pr_{x\sim U_{\{0,1\}^n}}[f(x)=y] $$
In randomized time $\poly(n,1/\epsilon)$ with access to an $\NP$ oracle. 
\end{theorem}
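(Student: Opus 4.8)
The plan is to recast the question as multiplicatively estimating $N := \bigl|\{x\in\{0,1\}^n : f(x)=y\}\bigr|$, since the desired probability equals $N/2^n$; membership in this set is checkable in polynomial time (evaluate $f$, which is given as part of the input, and compare with $y$), so the set admits short, efficiently verifiable witnesses. I would then apply the Sipser--G\'acs--Stockmeyer hashing technique.

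First, a coarse estimate. Fix a pairwise-independent hash family: $h(x) = Ax + b$ with $A \in \mathbb{F}_2^{k \times n}$ and $b \in \mathbb{F}_2^k$ uniformly random, which is describable and evaluable in $\poly(n)$ time. Let $Z$ count the $x$ with $f(x)=y$ and $h(x) = 0^k$; then $\E[Z] = N/2^k$ and, by pairwise independence, $\Var[Z] \le \E[Z]$. By Chebyshev's inequality, $N \ge 4 \cdot 2^k$ forces $\Pr[Z \ge 1] \ge 3/4$, while by Markov's inequality, $N \le 2^k/4$ forces $\Pr[Z \ge 1] \le 1/4$. The event $Z \ge 1$ is precisely the $\NP$ predicate ``$\exists x:\ f(x)=y \ \wedge\ Ax + b = 0^k$'' (guess $x$, verify in polynomial time), so a single $\NP$-oracle query decides it for a sampled $(A,b)$. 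Repeating $O(\log n)$ times at each level $k \in \{0, 1, \dots, n\}$ and taking majorities, with high probability every level gets correctly classified ``mostly yes'' or ``mostly no'' except possibly the $O(1)$ levels with $N/4 < 2^k < 4N$; letting $\hat{k}$ be the largest ``yes'' level yields $\hat{N} := 2^{\hat{k}}$ with $N/c \le \hat{N} \le cN$ for a fixed constant $c$. (If even the $k=0$ query---``is the set nonempty?''---returns ``no'', then $N = 0$ and we output $\alpha = 0$.)

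Second, amplification to $(1 \pm \epsilon)$. A constant-factor estimate is not yet good enough, so I would run the same procedure on the $t$-fold product set $\{(x_1, \dots, x_t) : f(x_i) = y \text{ for all } i\} \subseteq \{0,1\}^{nt}$, of size $N^t$ and still with polynomial-time-checkable membership. This produces $\widehat{N^t}$ with $N^t/c \le \widehat{N^t} \le c N^t$; choosing $t = \lceil (\ln c)/\ln(1+\epsilon) \rceil = O(1/\epsilon)$ and outputting $\alpha = (\widehat{N^t})^{1/t}/2^n$ gives $(1-\epsilon)(N/2^n) \le \alpha \le (1+\epsilon)(N/2^n)$. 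The working dimension grows only to $nt = O(n/\epsilon)$ and the number of oracle queries stays $\poly(n, 1/\epsilon)$, so the whole procedure runs in randomized time $\poly(n, 1/\epsilon)$ with an $\NP$ oracle.

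The one step that needs care is the coarse estimate: the second-moment bound $\Var[Z] \le \E[Z]$ and the resulting separation between the ``mostly yes'' and ``mostly no'' regimes must be set up with explicit constants, so that the ambiguous middle band spans only $O(1)$ levels and is therefore harmless. Everything else---the $\NP$-membership of the witness predicate, the union bound over the $n+1$ levels, and the product-set amplification---is routine; this is essentially a textbook application of hashing, and I do not expect any genuinely hard obstacle.
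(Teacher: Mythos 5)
The paper does not prove this theorem; it states it as a black box with a citation to Stockmeyer's paper. Your argument is correct, and it is the standard proof of randomized-$\NP$-oracle approximate counting: a coarse $O(1)$-factor estimate via pairwise-independent hashing (Sipser--G\'acs--Stockmeyer), with the $\NP$ oracle answering the predicate ``$\exists x:\ f(x)=y \wedge Ax+b=0^k$'' at each level $k$, followed by amplification to $(1\pm\epsilon)$ by running the coarse estimator on the $t$-fold product set with $t=O(1/\epsilon)$. The second-moment calculation ($\Var[Z]\le\E[Z]$ by pairwise independence), the Chebyshev/Markov separation of the ``yes'' and ``no'' bands with explicit constants, the union bound over the $O(nt)$ levels, and the choice of $t$ so that $c^{1/t}\le 1+\epsilon$ all check out, so the proposal is a correct and complete proof of the cited theorem.
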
 

In this section we use Theorem \ref{stockmeyer}, together with the assumed existence of a Sampler for $\distone$ to obtain hardness consequences.

In particular, we show that a Sampler would imply the existence of an efficient approximation to an Efficiently Specifiable polynomial in the following two contexts:
\begin{definition}[$\epsilon-$additive $\delta$-approximate solution]Given a distribution $D$ over $\mathbb{C}^n$ and $P:\mathbb{C}^n\rightarrow\mathbb{C}$ we say $T:\mathbb{C}^n\rightarrow\mathbb{C}$ is an $\epsilon-$additive approximate $\delta-$average case solution with respect to $D$, to $P:\mathbb{C}^{n}\rightarrow\mathbb{C}$, if $\Pr_{x\sim D}[|T(x)-P(x)|\leq \epsilon]\geq 1-\delta$.
\end{definition}

\begin{definition}[$\epsilon-$multiplicative $\delta$-approximate solution]Given a distribution $D$ over $\mathbb{C}^{n}$ and a function $P:\mathbb{C}^n\rightarrow\mathbb{C}$ we say $T:\mathbb{C}^n\rightarrow\mathbb{C}$ is an $\epsilon-$multiplicative approximate $\delta-$average case solution with respect to $D$, if $\Pr_{x\sim D}[|T(x)-P(x)|\leq\epsilon |P(x)|]\geq 1-\delta$.
\end{definition}

These definitions formalize a notion that we will need, in which an efficient algorithm computes a particular hard function approximately only on most inputs, and can act arbitrarily on a small fraction of remaining inputs.

In this section, we focus on the uniform distribution on $\{\pm 1\}$ strings, and a natural generalization:
\begin{definition}[$\mathbb{T}_\ell$]Given $\ell>0$, we define the set $\mathbb{T}_{\ell}=\{\omega_\ell^{0},\omega_\ell^{1}...,\omega_\ell^{\ell-1}\}$ where $\omega_\ell$ is a primitive $\ell$-th root of unity.\end{definition}
We note that $\mathbb{T}_\ell$ is just $\ell$ evenly spaced points on the unit circle, and $\mathbb{T}_2=\{\pm 1\}$.
\begin{theorem}[Complexity consequences of Sampler]\label{additive-m}  Given an Efficiently Specifiable polynomial $Q$ with $n$ variables and $m$ monomials, and a Sampler $S$ with respect to $\distone$, there is a randomized procedure $T:\mathbb{C}^{n}\rightarrow\mathbb{C}$, an $(\epsilon\cdot m)-$additive approximate $\delta-$average case solution with respect to the uniform distribution over $\mathbb{T}_\ell^{n}$, to the $|Q|^2$ function, that runs in randomized time $\poly(n,1/\epsilon,1/\delta)$ with access to an $\NP$ oracle. 
\end{theorem}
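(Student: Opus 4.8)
The plan is to combine the Quantum Sampling Theorem (Theorem~\ref{essampling}) with Stockmeyer's approximate counting (Theorem~\ref{stockmeyer}), in the standard way that a sampler plus approximate counting yields an approximation to the underlying probabilities. First I would fix the quantum circuit $C$ guaranteed by Theorem~\ref{essampling} that samples exactly from $\distone$; its output distribution on $y \in [0,\ell-1]^n$ satisfies $\Pr[y] = |Q(Z_y)|^2/(\ell^n m)$. Applying the Sampler $S$ to this instance, we get a classical randomized machine whose output distribution $\mathcal{D}_S$ is within total variation distance $\epsilon$ of $\distone$. Writing the output of $S$ as a function of its random bits $x \sim U_{\{0,1\}^{r(n)}}$, the quantity $\Pr_{x}[S(\cdots) = y]$ is exactly of the form to which Theorem~\ref{stockmeyer} applies, so with an $\NP$ oracle and in time $\poly(n,1/\epsilon',1/\delta')$ we can compute a value $\alpha_y$ with $(1-\epsilon')\mathcal{D}_S(y) \le \alpha_y \le (1+\epsilon')\mathcal{D}_S(y)$. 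Then $T(Z_y) \eqdef \alpha_y \cdot \ell^n m$ is the claimed estimator for $|Q(Z_y)|^2$, and I will need to check the error bound and the "most inputs" guarantee.

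For the error analysis: the triangle inequality gives $|\alpha_y - \distone(y)| \le |\alpha_y - \mathcal{D}_S(y)| + |\mathcal{D}_S(y) - \distone(y)| \le \epsilon' \mathcal{D}_S(y) + |\mathcal{D}_S(y)-\distone(y)|$. The first term is small pointwise since $\mathcal{D}_S(y) \le \distone(y) + |\mathcal{D}_S(y)-\distone(y)|$ and $\distone(y) \le 1$; the second term is controlled only on average — $\sum_y |\mathcal{D}_S(y)-\distone(y)| \le 2\epsilon$ — so by Markov's inequality, for a uniformly random $y$ (equivalently, since $\distone$ is not far from uniform on $\mathbb{T}_\ell^n$, one argues about $y \sim U_{\mathbb{T}_\ell^n}$), the probability that $|\mathcal{D}_S(y)-\distone(y)|$ exceeds $\frac{2\epsilon}{\delta \ell^n}$ is at most $\delta$. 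Multiplying through by $\ell^n m$ converts the bound on $|\alpha_y - \distone(y)|$ into a bound on $|T(Z_y) - |Q(Z_y)|^2|$ of order $\epsilon m$ (after rescaling $\epsilon'$ and $\delta$ appropriately), which is the $(\epsilon \cdot m)$-additive, $\delta$-average-case guarantee with respect to the uniform distribution on $\mathbb{T}_\ell^n$. Here I am using that $\ell^n$ distinct $y$-strings correspond to the $\ell^n$ evaluation points $Z_y$, so uniform $y$ matches uniform $Z_y \in \mathbb{T}_\ell^n$.

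One subtlety to handle carefully: Stockmeyer's theorem gives a \emph{multiplicative} estimate of $\mathcal{D}_S(y)$, which is only useful when $\mathcal{D}_S(y)$ is not astronomically small compared to the target additive accuracy; but since we only need an additive error of order $\epsilon m / (\ell^n m) = \epsilon/\ell^n$ on the probability scale, and the multiplicative slack contributes at most $\epsilon' \mathcal{D}_S(y) \le \epsilon'(\distone(y) + \text{t.v.\ slack})$, choosing $\epsilon'$ a suitable $1/\poly$ makes this harmless — this is the bookkeeping that makes the parameters line up, and I expect it to be the most delicate part of the write-up rather than conceptually hard. The other routine point is to confirm that the entire pipeline (running $S$, forming the counting instance, invoking Stockmeyer) remains randomized $\poly(n,1/\epsilon,1/\delta)$ with a single $\NP$ oracle, which follows since each piece is, and there is no adaptivity blowup. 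Finally I would note that $T$ is only defined on the points $Z_y$ for $y \in [0,\ell-1]^n$, which is exactly the support of the uniform distribution over $\mathbb{T}_\ell^n$, so this is the natural domain for the average-case statement.
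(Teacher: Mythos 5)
Your high-level plan --- run the classical Sampler, estimate the output probability $q_y$ with Stockmeyer's theorem, rescale by $\ell^n m$, and use Markov's inequality over uniform $y$ to convert the total-variation guarantee into a $1-\delta$ average-case guarantee --- is exactly the structure of the paper's proof, including the identification of $y\in[0,\ell-1]^n$ with the evaluation point $Z_y\in\mathbb{T}_\ell^n$.

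There is, however, a concrete error in how you dispose of the Stockmeyer multiplicative slack $|\alpha_y - q_y| \le \epsilon' q_y$: you call it ``small pointwise'' via $\distone(y)\le 1$, which is far too weak. The target additive accuracy on the probability scale is $\epsilon/\ell^n$, so knowing only $q_y = O(1)$ would force you to take $\epsilon' = O(\epsilon/\ell^n)$, and the resulting Stockmeyer call would run in time $\poly(n,\ell^n/\epsilon)$, which is exponential. This term is \emph{also} only controlled on average, and the paper handles it with a second Markov inequality: since $\sum_y q_y = 1$, one has $E_y[q_y]=1/\ell^n$ over uniform $y$, hence $\Pr_y[q_y > k/\ell^n] < 1/k$; choosing $k = 4/\delta$ and the Stockmeyer precision $\gamma = \epsilon\delta/8$ gives $\gamma q_y \le (\epsilon/2)/\ell^n$ except with probability $\delta/4$ over $y$, and the Stockmeyer call stays $\poly(n,1/\epsilon,1/\delta)$. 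With that correction, your union bound over three events (Stockmeyer failure, $q_y$ too large, total-variation slack $\Delta_y$ too large) yields the claimed $1-\delta$ guarantee, matching the paper.
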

\begin{proof}
We need to give a procedure that outputs an $\epsilon m$-additive estimate to the $|Q|^2$ function evaluated at a uniform setting of the variables, with probability $1-\delta$ over choice of setting.  Setting $\beta=\frac{\epsilon\delta}{16}$, suppose $S$ samples from a distribution $\mathcal{D'}$ such that $\|\distone-\mathcal{D}'\|\leq\beta$.  We let $p_y$ be $\Pr_{\distone}[y]$ and $q_y$ be $\Pr_{\mathcal{D}'}[y]$.\\

Our procedure picks a uniformly chosen encoding of a setting $y\in[0,\ell-1]^{n}$, and outputs an estimate $\tilde{q}_y$.  Note that $p_y=\frac{|Q(Z_{y})|^2}{\ell^{n}m}$.  Thus our goal will be to output a $\tilde{q}_y$ that approximates $p_y$ within additive error $\epsilon \frac{m}{\ell^{n}m}=\frac{\epsilon}{\ell^{n}}$, in time polynomial in $n$, $\frac{1}{\epsilon}$, and $\frac{1}{\delta}$.\\

We need:

$$\Pr_y[|\tilde{q}_y-p_y| > \frac{\epsilon}{\ell^{n}}] \leq\delta$$

First, define for each $y$, $\Delta_y=|p_y-q_y|$, and so $\|\distone-\mathcal{D}'\|=\frac{1}{2}\sum\limits_y[\Delta_y]$. \\

Note that:

$$E_y[\Delta_y]=\frac{\sum\limits_y[\Delta_y]}{\ell^{n}}=\frac{2\beta}{\ell^{n}}$$

And applying Markov's inequality, $\forall k>1$,

$$\Pr_y[\Delta_y>\frac{k2\beta}{\ell^{n}}]<\frac{1}{k}$$

Setting $k=\frac{4}{\delta}, \beta=\frac{\epsilon\delta}{16}$, we have,

$$\Pr_y[\Delta_y>\frac{\epsilon}{2}\cdot\frac{1}{\ell^{n}}]<\frac{\delta}{4}$$

Then use approximate counting (with an $\NP$ oracle), using Theorem \ref{stockmeyer} on the randomness of $S$ to obtain an output $\tilde{q}_y$ so that, for all $\gamma>0$, in time polynomial in $n$ and $\frac{1}{\gamma}$:

$$\Pr[|\tilde{q}_y-q_y| >\gamma\cdot q_y]<\frac{1}{2^{n}}$$
Because we can amplify the failure probability of Stockmeyer's algorithm to be inverse exponential.
Note that:

$$E_y[q_y]=\frac{\sum\limits_y q_y}{\ell^{n}}=\frac{1}{\ell^{n}}$$

Thus,

$$\Pr_y[q_y>\frac{k}{\ell^{n}}]<\frac{1}{k}$$

Now, setting $\gamma=\frac{\epsilon\delta}{8}$ and applying the union bound:


\begin{align*}
\Pr_y[|\tilde{q}_y-p_y|>\frac{\epsilon}{\ell^{n}}] &\leq\Pr_y[|\tilde{q}_y-q_y|>\frac{\epsilon}{2}\cdot\frac{1}{\ell^{n}}]+\Pr_y[|q_y-p_y|>\frac{\epsilon}{2}\cdot\frac{1}{\ell^{n}}]
\\  & \leq \Pr_y[q_y>\frac{k}{\ell^{n}}]+\Pr[|\tilde{q}_y-q_y|>\gamma\cdot q_y]+\Pr_y[\Delta_y>\frac{\epsilon}{2}\cdot\frac{1}{\ell^{n}}]
\\ & \leq\frac{1}{k}+\frac{1}{2^{n}}+\frac{\delta}{4}
\\ & \leq \frac{\delta}{2}+\frac{1}{2^{n}}\leq \delta.
\end{align*}

\end{proof}

Now, as will be proven in Appendix \ref{variance-section}, the variance, $\var$, of the distribution over $\mathbb{C}$ induced by an Efficiently Specifiable $Q$ with $m$ monomials, evaluated at uniformly distributed entries over $\mathbb{T}_\ell^{n}$ is $m$, and so the preceding Theorem \ref{additive-m} promised us we can achieve an $\epsilon\var$-additive approximation to $Q^2$, given a Sampler.  We now show that, under a conjecture, this approximation can be used to obtain a good multiplicative estimate to $Q^2$.  This conjecture effectively states that the Chebyshev inequality for this random variable is tight.

\begin{conjecture}[Anti-Concentration Conjecture relative to an $n$-variate polynomial $Q$ and distribution $\mathcal{D}$ over $\mathbb{C}^n$]\label{anticon}  There exists a polynomial $p$ such that for all $n$ and $\delta>0$,
$$\Pr_{X\sim \mathcal{D}}\left[\left|Q(X)\right|^2<\frac{\var}{p(n,1/\delta)}\right]<\delta$$

\end{conjecture}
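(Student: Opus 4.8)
Since the statement is posed as a conjecture, the following is a plan of attack rather than a proof. The natural starting point is the second-moment method. Because every monomial of $Q$ is multilinear and nonconstant and $E_y[\omega_\ell^{y}]=0$ for $y$ uniform on $\{0,\dots,\ell-1\}$ when $\ell\ge 2$, we have $E_{X\sim\mathcal D}[Q(X)]=0$, so $E\bigl[|Q(X)|^2\bigr]=\var=m$ by the variance computation of Appendix~\ref{variance-section}. Feeding the nonnegative random variable $|Q(X)|^2$ into the Paley-Zygmund inequality gives, for every $\theta\in(0,1)$,
$$\Pr_{X\sim\mathcal D}\bigl[\,|Q(X)|^2>\theta m\,\bigr]\ \ge\ (1-\theta)^2\,\frac{m^2}{E[|Q(X)|^4]},$$
so it suffices to upper bound $E[|Q(X)|^4]$. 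Expanding $|Q|^4=Q\bar Q\,Q\bar Q$ and using orthogonality of the characters of $\mathbb T_\ell$ coordinate by coordinate turns this into a pure combinatorial count: writing $\mathbf 1_a\in\{0,1\}^n$ for the variable set of the $a$-th monomial,
$$E[|Q(X)|^4]\ =\ \#\bigl\{(a,b,c,d)\in[m]^4:\ \mathbf 1_a+\mathbf 1_c\equiv\mathbf 1_b+\mathbf 1_d \pmod{\ell}\bigr\},$$
which is at least $2m^2-m$ just from the diagonal terms $\{a,c\}=\{b,d\}$. For $Q=\Per$ this is essentially the fourth-moment computation underlying the weak anti-concentration estimate of \cite{boson}, and for the Hamiltonian-cycle polynomial it reduces to an analogous count of closed-walk configurations; in both cases the count is $\mathrm{poly}(n)\cdot m^2$, whereupon Paley-Zygmund yields $|Q(X)|^2\ge\Omega(m)$ with probability $\Omega(1/\mathrm{poly}(n))$.

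That, however, is only a \emph{constant-order} statement, whereas Conjecture~\ref{anticon} asks the threshold $m/p(n,1/\delta)$ to degrade gracefully as $\delta\to 0$, i.e. it demands a genuine small-ball bound at the origin. The route we would pursue to close this gap is Gaussian approximation: $Q(X)/\sqrt m$ is a normalized sum of $m$ mean-zero, unit-modulus terms, and if a central-limit / invariance-principle argument places it close (in a sufficiently strong metric) to a standard complex Gaussian $g$, then $|Q(X)|^2/m$ is close to $|g|^2$, an $\mathrm{Exp}(1)$ variable, so $\Pr[\,|Q(X)|^2<tm\,]=O(t)$ and the conjecture holds with $p(n,1/\delta)=O(1/\delta)$. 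The combinatorial fourth-moment bound above is the first ingredient of such an argument, since it controls $\Var[|Q(X)|^2]$; one additionally needs higher moments and, crucially, quantitative control of the dependence introduced by monomials sharing variables. For the $\ell=2$ and the binomial cases this would be phrased through a polynomial invariance principle on the hypercube, respectively over the integers.

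The main obstacle is that both routes bottleneck at precisely the same point: the $m$ monomials of a dense, high-degree Efficiently Specifiable polynomial are far from independent. When monomials overlap heavily the count $E[|Q|^4]$ can exceed $m^2$ by a super-polynomial factor, so even the weak Paley-Zygmund conclusion breaks down; and the Gaussian-approximation route needs a quantitative multivariate central limit theorem with dependence control that, for degree-$n$ polynomials such as the Permanent, is simply not known. Indeed Conjecture~\ref{anticon} generalizes the Permanent Anti-Concentration Conjecture of \cite{boson} and so is at least as hard. A realistic partial target, which we expect is provable unconditionally, is the case of Efficiently Specifiable $Q$ of degree $d=O(\log n/\log\log n)$: there one can combine the fourth-moment count with a hypercontractive invariance principle and a Carbery-Wright-type anti-concentration inequality, whose loss is only a factor $\epsilon^{-1+1/d}$ and hence polynomial in that regime.
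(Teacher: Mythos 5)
The statement you were asked to prove is posed in the paper as Conjecture~\ref{anticon}, and the paper gives \emph{no} proof of it; indeed the authors explicitly say at the end of Section~\ref{sec-summary} that ``both of these conjectures seem out of reach,'' and the strongest evidence they cite is the Tao--Vu bound, which only gives failure probability $n^{-0.1}$ rather than an arbitrary inverse polynomial $\delta$. You correctly recognize that the statement is a conjecture and offer a plan of attack rather than a proof, which is the appropriate response; there is therefore no paper proof to compare against.

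As to the substance of your plan: the opening steps are sound. The mean-zero observation, the identity $\E[|Q(X)|^2]=\var=m$ over $\mathbb{T}_\ell^n$, the character-orthogonality expansion identifying $\E[|Q(X)|^4]$ with the count of quadruples $(a,b,c,d)$ satisfying $\mathbf 1_a+\mathbf 1_c\equiv\mathbf 1_b+\mathbf 1_d\pmod\ell$, the $2m^2-m$ diagonal lower bound, and the Paley--Zygmund deduction are all correct; and you rightly flag that this only yields a constant-probability lower bound, whereas the conjecture demands a small-ball estimate that scales with $\delta$. You also correctly diagnose the real obstruction --- heavy dependence among overlapping monomials, which defeats both the fourth-moment route (the count can blow up super-polynomially past $m^2$) and any naive CLT/invariance-principle route for degree-$n$ polynomials such as the Permanent. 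This is consistent with the paper's own framing, which emphasizes that the Conjecture strictly generalizes the Aaronson--Arkhipov Permanent Anti-Concentration Conjecture and is therefore at least as hard.

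One caveat worth flagging in your ``realistic partial target'': a Carbery--Wright bound with loss $\epsilon^{-(1-1/d)}$ does give polynomial dependence only when $d$ is small, but the polynomials the paper actually needs (Permanent, Hamiltonian Cycle) have degree $d=n$, and it is not clear that \emph{any} Efficiently Specifiable polynomial of degree $O(\log n/\log\log n)$ is $\sharpP$-hard to approximate on average in the sense required by the companion hardness conjecture. So proving Conjecture~\ref{anticon} in that low-degree regime, while plausibly feasible, would not by itself yield the paper's desired sampling lower bound; you would need to pair it with a degree-$O(\log n/\log\log n)$ polynomial for which the average-case hardness conjecture also holds. It is worth making that dependency explicit when you present the partial target.
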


\begin{theorem}\label{thm-anticon}
Assuming Conjecture \ref{anticon}, relative to an Efficiently Specifiable polynomial $Q$ and a distribution $\mathcal{D}$, an $\epsilon\var$-additive approximate $\delta$-average case solution with respect to $D$, to the $|Q|^2$ function  can be used to obtain an $\epsilon'\leq poly(n)\epsilon$-multiplicative approximate $\delta'=2\delta$-average case solution with respect to $\mathcal{D}$ to $|Q|^2$.
\end{theorem}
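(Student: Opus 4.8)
The plan is to let the desired multiplicative estimator be literally the \emph{same} randomized procedure $T$ supplied by the additive guarantee, and to argue that on inputs where $\left|Q(x)\right|^2$ is not too small, an additive error of $\epsilon\var$ is automatically a multiplicative error. First I would fix the polynomial $p$ furnished by Conjecture~\ref{anticon} (for the pair $Q,\mathcal{D}$) and set $\epsilon' \eqdef p(n,1/\delta)\cdot\epsilon$. Since Samplers are defined only in the regime $\delta\in 1/\poly(n)$, we have $p(n,1/\delta)=\poly(n)$, so $\epsilon'\le\poly(n)\,\epsilon$ as claimed; and I would take $\delta' = 2\delta$.

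The core is a deterministic, pointwise observation. Suppose $x\in\mathbb{C}^n$ is such that both (i) $\bigl|T(x)-\left|Q(x)\right|^2\bigr|\le\epsilon\var$ and (ii) $\left|Q(x)\right|^2\ge \var/p(n,1/\delta)$ hold. Then
$$\bigl|T(x)-\left|Q(x)\right|^2\bigr|\;\le\;\epsilon\var\;=\;\epsilon'\cdot\frac{\var}{p(n,1/\delta)}\;\le\;\epsilon'\left|Q(x)\right|^2,$$
so $T(x)$ is an $\epsilon'$-multiplicative estimate of $\left|Q(x)\right|^2$ at $x$. It remains to control the probability that (i) or (ii) fails. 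By the hypothesis that $T$ is an $\epsilon\var$-additive $\delta$-average case solution, event (i) fails with probability at most $\delta$ over $x\sim\mathcal{D}$ (and the internal randomness of $T$); by Conjecture~\ref{anticon}, event (ii) fails with probability strictly less than $\delta$. A union bound then gives $\Pr_{x\sim\mathcal{D}}\bigl[\,\bigl|T(x)-\left|Q(x)\right|^2\bigr|>\epsilon'\left|Q(x)\right|^2\,\bigr]\le 2\delta=\delta'$, which is exactly the statement that $T$ is an $\epsilon'$-multiplicative $\delta'$-average case solution with respect to $\mathcal{D}$ to $\left|Q\right|^2$. The running time and oracle access are inherited unchanged from $T$.

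There is no real obstacle here beyond careful bookkeeping: the one point to get right is the quantifier structure of Conjecture~\ref{anticon} (``there exists $p$ such that for all $n$ and all $\delta>0$\dots''), which is what lets us invoke it at precisely the target failure level $\delta$ with a single fixed $p$, and the observation that the $1/\delta$-dependence of $p$ is harmless because it is absorbed into $\poly(n)$ in the parameter regime in which all of our samplers and average-case solutions live. I would also remark that, combined with Theorem~\ref{additive-m} and the variance computation $\var=m$ (Appendix~\ref{variance-section}) for the uniform distribution on $\mathbb{T}_\ell^n$, this yields, under Conjecture~\ref{anticon} for that distribution, a $\poly(n)\epsilon$-multiplicative $2\delta$-average case solution to $\left|Q\right|^2$ from a Sampler for $\distone$.
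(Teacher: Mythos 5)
Your proposal is correct and follows essentially the same approach as the paper: use the estimator unchanged, observe that anti-concentration converts the additive bound into a multiplicative one on all points where $|Q(x)|^2 \geq \var/p(n,1/\delta)$, and apply a union bound over the two failure events to get $\delta' = 2\delta$. The paper phrases the bookkeeping in the reverse direction (choosing $\epsilon=\epsilon'/p(n,1/\delta)$, $\delta=\delta'/2$ given target parameters) but the argument is identical.
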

\begin{proof}
 
Suppose $\lambda$ is, with high probability, an $\epsilon\var$-additive approximation to $|Q(X)|^2$, as guaranteed in the statement of the Theorem.  This means:

$$\Pr_{X\sim \mathcal{D}}\left[\left|\lambda-\left|Q(X)\right|^2\right|>\epsilon \var\right]< \delta$$

Now assuming Conjecture \ref{anticon} with polynomial $p$, we will show that $\lambda$ is also a good multiplicative approximation to $|Q(X)|^2$ with high probability over $X$.\\

By the union bound,

\begin{align*}
\Pr_{X\sim \mathcal{D}}\left[\frac{\left|\lambda-\left|Q(X)\right|^2\right|}{\epsilon p(n,1/\delta)}>\left|Q(X)\right|^2\right]  \leq & \Pr_{X\sim D} \left[\left|\lambda-\left|Q(X)\right|^2\right|>\epsilon \var \right] \: +
\\ & \Pr_{X\sim \mathcal{D}}\left[\frac{\epsilon \var}{\epsilon p(n,1/\delta)}>\left|Q(X)\right|^2\right]
\\ & \leq 2\delta
\end{align*}

Where the second line comes from Conjecture \ref{anticon}.  Thus we can achieve any desired multiplicative error bounds $(\epsilon',\delta')$ by setting $\delta=\delta'/2$ and $\epsilon=\epsilon'/p(n,1/\delta)$.

\end{proof}

For the results in this section to be meaningful, we simply need the Anti-Concentration conjecture to hold for some Efficiently Specifiable polynomial that is $\sharpP$-hard to compute, relative to any distribution we can sample from (either $U_n$, or $\mathcal{B}(0,k)^n$).  We note that Aaronson and Arkhipov \cite{boson} conjectures the same statement as Conjecture \ref{anticon} for the special case of the $\Per$ function relative to matrices with entries distributed independently from the complex Gaussian distribution of mean 0 and variance 1.

Additionally, we acknowledge a result of Tao and Vu who show:
\begin{theorem}[Tao \& Vu \cite{taovu}]  For all $\epsilon>0$ and sufficiently large $n$, 
$$\Pr_{X\in \{\pm 1\}^{n \times n}}\left[\left|\Per[X]\right| < \frac{\sqrt{n!}}{n^{\epsilon n}}\right]<\frac{1}{n^{0.1}} $$

\end{theorem}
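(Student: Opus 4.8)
The plan is to follow Tao and Vu's two-part strategy: first compute the second moment of $\Per(X)$ exactly, which pins down $\sqrt{n!}$ as the correct scale, and then prove a one-sided anti-concentration estimate showing that $|\Per(X)|$ almost never falls more than a sub-exponential factor below this scale.

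For the second moment, expand $\Per(X)^2=\sum_{\sigma,\tau\in S_n}\prod_{i=1}^n X_{i,\sigma(i)}X_{i,\tau(i)}$ and take expectations over the independent $\pm1$ entries: any term with $\sigma\neq\tau$ contains some entry $X_{i,j}$ to an odd power and hence contributes $0$, so $\E[\Per(X)^2]=\sum_{\sigma\in S_n}1=n!$, while $\E[\Per(X)]=0$ trivially. Thus $\Var[\Per(X)]=n!$, and the content of the theorem is precisely that there is essentially no lower-tail blow-down beyond the allowed $n^{\epsilon n}$ factor.

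For the anti-concentration step I would expose the rows $R_1,\dots,R_n$ of $X$ one at a time and track a vector-valued Laplace-expansion process. For $0\le k\le n$ let $w_k$ be the integer vector indexed by $k$-subsets $S$ of the columns, whose $S$-coordinate is the permanent of the $k\times k$ submatrix of $X$ on rows $1,\dots,k$ and columns $S$; so $w_0=1$ (the empty permanent) and $w_n=\Per(X)$. Laplace-expanding along the newly revealed row gives $(w_k)_S=\sum_{j\in S}X_{k,j}(w_{k-1})_{S\setminus\{j\}}$, and squaring and summing yields, conditioned on $R_1,\dots,R_{k-1}$, the exact identity $\E\bigl[\|w_k\|_2^2\mid w_{k-1}\bigr]=(n-k+1)\,\|w_{k-1}\|_2^2$, whose telescoping product $\prod_{k=1}^n(n-k+1)=n!$ reproduces the second moment. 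The heart of the argument is a one-sided lower bound: with high probability over $R_k$, $\|w_k\|_2^2$ is at least a $1/\poly(n)$ fraction of its conditional mean, since $\|w_k\|_2^2$ is a sum of squares of independent-sign linear forms $\langle R_k,\cdot\rangle$ whose coefficient vectors are the spread-out previous-level sub-permanents, so an Erd\H{o}s--Littlewood--Offord--type estimate forbids too much simultaneous cancellation. Passing to logarithms, $\log\|w_k\|_2^2-\log\|w_{k-1}\|_2^2=\log(n-k+1)+\Delta_k$, where the fluctuations $\Delta_k$ have small conditional mean and controlled (essentially sub-exponential, bounded by $O(\log n)$ with the relevant probability) increments; a martingale/Freedman-type concentration bound applied to $\sum_k\Delta_k$ over the $n$ steps then gives $|\Per(X)|^2=\|w_n\|_2^2\ge n!\cdot n^{-\epsilon n}$ with failure probability far below $n^{-0.1}$, which is the claim.

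The main obstacle is exactly this per-step anti-concentration for $\|w_k\|_2^2$: one has to rule out that the fresh row $R_k$ nearly annihilates all the linear forms appearing in the Laplace expansion simultaneously, and to make this quantitative one needs the coefficient vectors — the sub-permanents coming from the earlier rows — to be neither too sparse nor too additively structured. Establishing that a typical $w_{k-1}$ has this property genuinely uses the randomness of rows $1,\dots,k-1$ and is where the (inverse-)Littlewood--Offord machinery of Tao and Vu enters; a secondary but necessary point is to also bound the upper fluctuations of $\|w_k\|_2^2/\E[\|w_k\|_2^2\mid w_{k-1}]$, so that the martingale increments $\Delta_k$ are two-sided bounded and the concentration bound applies with the stated loss.
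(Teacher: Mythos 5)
The paper does not prove this statement: it is invoked as a black box, cited directly from Tao and Vu \cite{taovu}, so there is no internal argument to compare your proposal against. That said, your outline does reconstruct the broad architecture of Tao and Vu's actual proof — the second-moment identity $\E[\Per(X)^2]=n!$ is correct, the row-exposure process with the vector $w_k$ of $k\times k$ sub-permanents is their framework, and the telescoping identity $\E[\|w_k\|_2^2\mid w_{k-1}]=(n-k+1)\|w_{k-1}\|_2^2$ checks out.

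However, what you have written is a scaffold rather than a proof, and the gap sits exactly where you flag it: the per-step lower anti-concentration. You need a lemma of the form ``with probability $1-n^{-O(1)}$ over the fresh row $R_k$, $\|w_k\|_2^2\ge \E[\|w_k\|_2^2\mid w_{k-1}]/\poly(n)$,'' and you defer this to ``Littlewood--Offord machinery'' without stating or proving it. This is not a routine application of the classical Erd\H{o}s--Littlewood--Offord theorem, which controls a \emph{single} Rademacher linear form; here $\|w_k\|_2^2$ is a sum of $\binom{n}{k}$ squares of strongly dependent linear forms (they share the same sign vector $R_k$ and their coefficient vectors overlap heavily), and the needed statement requires showing the sub-permanent vectors $w_{k-1}$ are ``spread out'' in a quantitative sense, which in turn consumes essentially all of the randomness of the earlier rows and is the bulk of Tao and Vu's paper. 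Their argument also does not literally track $\log\|w_k\|_2^2$ as a near-martingale in the clean way you describe — they work with a more refined decomposition of which sub-permanents are large — and the upper-tail control you mention only in passing is likewise a real step, not a footnote. So the direction is right, but the proposal stops well short of proving the cited theorem; for the purposes of this paper, treating it as an external result, as the authors do, is the appropriate move.
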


Which comes quite close to our conjecture for the case of the $\Per$ function and uniformly distributed $\{\pm 1\}^{n\times n}=\mathbb{T}_2^{n \times n}$ matrix.  More specifically, for the above purpose of relating the additive hardness to the multiplicative, we would need an upper bound of any inverse polynomial $\delta$, instead of a fixed $n^{-0.1}$.

\section{Sampling from Distributions with Probabilities Proportional to $[-k,k]$ Evaluations of Efficiently Specifiable Polynomials}\label{sec-non-squash}

In the prior sections we discussed quantum sampling from distributions in which the probabilities are proportional to evaluations of Efficiently Specifiable polynomials evaluated at points in $\mathbb{T}_{\ell}^n$.  In this section we show how to generalize this to quantum sampling from distributions in which the probabilities are proportional to evaluations of Efficiently Specifiable polynomials evaluated at polynomially bounded integer values.  In particular, we show a simple way to take an Efficiently Specifiable polynomial with $n$ variables and create another Efficiently Specifiable polynomial with $kn$ variables, in which evaluating this new polynomial at $\{-1,+1\}^{kn}$ is equivalent to evaluation of the old polynomial at $[-k,k]^n$.

\begin{definition}[$k$-valued equivalent polynomial]
For every Efficiently Specifiable polynomial $Q$ with $m$ monomials and every fixed $k>0$ consider the polynomial $Q'_k:\mathbb{T}^{kn}_2\rightarrow\mathbb{R}$ defined by replacing each variable $x_i$ in $Q$ with the sum of $k$ new variables $x_i^{(1)}+x_i^{(2)}+...+x_i^{(k)}$.  We will call $Q'_k$ the $k$-valued equivalent polynomial with respect to $Q$.
\end{definition}

Note that a uniformly chosen $\{\pm 1\}$ assignment to the variables in $Q'_k$ induces an assignment to the variables in $Q$, distributed from a distribution we call $\mathcal{B}(0,k)$:
\begin{definition}[$\mathcal{B}(0,k)$] For $k$ a positive integer, we define the distribution $\mathcal{B}(0,k)$ supported over the odd integers in the range $[-k,k]$ (if $k$ is odd), or even integers in the range $[-k,k]$ (if $k$ is even), so that:
\[\Pr_{\mathcal{B}(0,k)}[y]= \left\{
  \begin{array}{lr}
    \frac{\binom{k}{\frac{k+y}{2}}}{2^k} &  \: if \: y \: and \:k \: are \: both \: odd \:or \: both \: even\\
    0 &  otherwise
  \end{array}
\right. \]
\end{definition}
\begin{theorem}\label{kvaluesampling}  Given an Efficiently Specifiable polynomial $Q$ with $n$ variables and $m$ monomials, let $Q'_k$ be its $k$-valued equivalent polynomial.  For all $\ell<exp(n)$, we can quantumly sample from the distribution $\mathcal{D}_{Q'_k,\ell}$ in time $\poly(n,k)$.
\end{theorem}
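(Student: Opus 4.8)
The plan is to reduce directly to Theorem~\ref{essampling}. The key observation is that the $k$-valued equivalent polynomial $Q'_k$ is itself an Efficiently Specifiable polynomial in $kn$ variables, so once we verify this, the desired sampler is obtained by invoking the Quantum Sampling Theorem on $Q'_k$ with parameter $\ell$. So the bulk of the argument is bookkeeping: I need to exhibit the one-to-one function $h'$ (with efficiently computable inverse) that specifies $Q'_k$, given the function $h$ that specifies $Q$.

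First I would expand $Q'_k$ symbolically. Each monomial $\prod_{i : h(z)_i = 1} x_i$ of $Q$ becomes $\prod_{i : h(z)_i = 1}\bigl(x_i^{(1)} + \cdots + x_i^{(k)}\bigr)$, which upon full expansion is a sum of $k^{(\text{number of variables in the monomial})}$ terms, each a product of exactly one $x_i^{(j_i)}$ for each $i$ with $h(z)_i = 1$. Crucially, since $Q$ is multilinear and all original monomials are distinct, no two of these expanded terms coincide: a term records both which original variables appear and which ``copy'' $j_i \in [k]$ was chosen for each, so the map (original monomial $z$, choice of copies) $\mapsto$ (monomial of $Q'_k$) is injective, and all coefficients of $Q'_k$ remain in $\{0,1\}$. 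Thus $Q'_k$ is multilinear, homogeneous of the same degree, with $0/1$ coefficients, and its monomials are indexed by pairs $(z, c)$ where $z \in [m]$ and $c$ specifies a copy index for each of the $d(z)$ variables present in monomial $z$. Letting $m' = \sum_{z \in [m]} k^{d(z)}$ be the number of monomials of $Q'_k$, I would define $h' : [m'] \to \{0,1\}^{kn}$ by fixing any efficiently invertible pairing between $[m']$ and the set of such pairs $(z,c)$ — e.g. order lexicographically by $z$ and then by $c$ — and mapping $(z,c)$ to the characteristic vector in $\{0,1\}^{kn}$ that has a $1$ in coordinate $(i, j)$ exactly when $h(z)_i = 1$ and $c$ assigns copy $j$ to variable $i$. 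Since $h$ and $h^{-1}$ are efficiently computable, so are $h'$ and $(h')^{-1}$ (computing $d(z)$ and the prefix sums $\sum_{z' \le z} k^{d(z')}$ is polynomial-time arithmetic).

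Having established that $Q'_k$ is Efficiently Specifiable with $kn$ variables, $m'$ monomials, and specifying function $h'$, and noting $\ell < \exp(n) \le \exp(kn)$, I apply Theorem~\ref{essampling} to $Q'_k$: this yields a quantum algorithm running in $\poly(kn, \log \ell) = \poly(n,k)$ time that samples from $\mathcal{D}_{Q'_k, \ell}$, where $\Pr[y] = |Q'_k(Z_y)|^2 / (\ell^{kn} m')$ for $y \in [0,\ell-1]^{kn}$. That is exactly the claimed distribution. I should also remark that $m'$ is at most $m \cdot k^n$, so $\log m'$ is polynomially bounded and the normalization is efficiently computable, which is all Theorem~\ref{essampling} requires.

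The main obstacle — really the only subtle point — is verifying the injectivity claim that the expansion of $Q'_k$ produces no cancellations and no repeated monomials, so that it genuinely has $0/1$ coefficients and fits Definition~\ref{efficientlyspecifiable} on the nose; this rests on multilinearity of $Q$ and distinctness of its monomials, and on the fact that distinct copy-assignments yield distinct sets of copy-variables. Everything else is routine: defining an efficiently invertible indexing of the monomials of $Q'_k$ and checking the time bound. One caveat to flag explicitly is that $m'$ depends on the per-monomial degrees $d(z)$; since these are bounded by $n$, the construction is fine, but the precise value of $m'$ (not simply $mk^n$ unless $Q$ is also homogeneous in the naive sense that every monomial has the same degree, which homogeneity does give) should be stated as $\sum_z k^{d(z)}$ to be safe.
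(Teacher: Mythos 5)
Your proposal is correct and follows essentially the same route as the paper: reduce to Theorem~\ref{essampling} by showing $Q'_k$ is itself Efficiently Specifiable via an explicit $h'$ built from $h$, which is exactly the content of the paper's Lemma~\ref{kvaluedes} (the paper indexes the domain of $h'$ as $[m]\times[k]^d$ rather than $[m']=[mk^d]$, but this is just a choice of encoding). Your extra care in verifying that the expansion introduces no repeated monomials — so the coefficients stay in $\{0,1\}$ and $h'$ is genuinely one-to-one — is a detail the paper handles only implicitly by exhibiting $h'^{-1}$, and your worry about non-uniform per-monomial degrees is moot since Definition~\ref{efficientlyspecifiable} requires $Q$ homogeneous, as you note at the end.
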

\begin{proof}
Our proof follows from the following lemma, which proves that $Q'_k$ is Efficiently Specifiable.
\begin{lemma}\label{kvaluedes} Suppose $Q$ is an $n$-variate, homogeneous degree $d$ Efficiently Specifiable polynomial with $m$ monomials relative to a function $h:[m]\rightarrow\{0,1\}^n$.  Let $k\leq poly(n)$ and let $Q'_k$ be the $k$-valued equivalent polynomial with respect to $Q$.  Then $Q'_k$ is Efficiently Specifiable with respect to an efficiently computable function $h':[m]\times[k]^d\rightarrow\{0,1\}^{kn}$. 
\end{lemma}
\begin{proof}

We first define and prove that $h'$ is efficiently computable.  We note that if there are $m$ monomials in $Q$, there are $mk^d$ monomials in $Q'$.  As before, we'll think of the new variables in $Q'_k$ as indexed by a pair of indices, a ``top index" in $[k]$ and a ``bottom index" in $[n]$.  Equivalently we are labeling each variable in $Q'_k$ as $x_{i}^{(j)}$, the $j$-th copy of the $i$-th variable in $Q$.  We are given $x\in[m]$ and $y_1,y_2,...,y_d\in[k]$.  Then, for all $i\in[n]$ and $j\in[k]$, we define the output, $z=h'(x,y_1,y_2,...,y_d)_{i,j}=1$  iff:
\begin{enumerate}
\item{$h(x)_i=1$}
\item{If $h(x)_i$ is the $\ell\leq d$-th non-zero element of $h(x)$, then we require $y_\ell=j$}
\end{enumerate}

We will now show that $h'^{-1}$ is efficiently computable.  As before we will think of $z\in\{0,1\}^{kn}$ as being indexed by a pair of indices, a `top index" in $[k]$ and a ``bottom index" in $[n]$.  Then we compute $h'^{-1}(z)$ by first obtaining from $z$ the bottom indices $j_1,j_2,...,j_d$ and the corresponding top indices, $i_1,i_2,...,i_d$.  Then obtain from the bottom indices the string $x\in\{0,1\}^n$ corresponding to the indices of variables used in $Q$ and output the concatenation of $h^{-1}(x)$ and $j_1,j_2,...,j_d$.\\
\end{proof}
Theorem \ref{kvaluesampling} now follows from Lemma \ref{kvaluedes}, where we established that $Q'_k$ is Efficiently Specifiable, and Theorem \ref{essampling}, where we established that we can sample from $\mathcal{D}_{Q'_k,\ell}$ quantumly.
\end{proof}

\begin{theorem}
Let $\var=\Var[Q(X_1,X_2,...,X_n)]$ denote the variance of the distribution over $\mathbb{R}$ induced by $Q$ with assignments distributed from $\mathcal{B}(0,k)^n$.  Given a Sampler $S$ with respect to $\mathcal{D}_{Q'_{k},2}$, we can find a randomized procedure $T:\mathbb{R}^n\rightarrow\mathbb{R}$, an $\epsilon\var$-additive approximate $\delta$-average case solution to $Q^2$ with respect to $\mathcal{B}(0,k)^n$ that runs in time $poly(n,1/\epsilon,1/\delta)$ with access to an $\NP$ oracle. 
\end{theorem}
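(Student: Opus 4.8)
\emph{Proof proposal.} The plan is to derive this statement as a corollary of Theorem~\ref{additive-m} applied to the $k$-valued equivalent polynomial $Q'_k$, using the dictionary between $\{\pm1\}^{kn}$ assignments to $Q'_k$ and $\mathcal{B}(0,k)^n$ assignments to $Q$. First I would invoke Lemma~\ref{kvaluedes}: $Q'_k$ is an Efficiently Specifiable polynomial, and — since $Q$ is multilinear of degree $d$, so distinct monomials of $Q$ expand (under the substitution $X_i\mapsto x_i^{(1)}+\cdots+x_i^{(k)}$) into disjoint collections of multilinear monomials of $Q'_k$ — it has exactly $mk^d$ monomials, all with coefficient $1$. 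Hence a Sampler $S$ for $\mathcal{D}_{Q'_k,2}$ is precisely the object to which Theorem~\ref{additive-m} applies with $\ell=2$: it yields a randomized procedure $T'\colon\{\pm1\}^{kn}\to\mathbb{R}$, computable in time $\poly(kn,1/\epsilon,1/\delta)$ with an $\NP$ oracle, that is an $(\epsilon\cdot mk^d)$-additive approximate $\delta$-average-case solution to the $|Q'_k|^2$ function with respect to the uniform distribution on $\mathbb{T}_2^{kn}=\{\pm1\}^{kn}$.

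Next I would transport this guarantee from $Q'_k$ on $\{\pm1\}^{kn}$ to $Q$ on $\mathcal{B}(0,k)^n$. Consider the map $\pi\colon\{\pm1\}^{kn}\to[-k,k]^n$ sending $(x_i^{(j)})_{i,j}$ to the vector $w$ with $w_i=x_i^{(1)}+\cdots+x_i^{(k)}$. By construction of $Q'_k$ we have $Q'_k(x)=Q(\pi(x))$ for every $x$, and $\pi$ pushes the uniform distribution on $\{\pm1\}^{kn}$ forward to $\mathcal{B}(0,k)^n$. To define the claimed procedure $T$ on an input $w$, I would first sample $x$ uniformly from the fiber $\pi^{-1}(w)$ — easy, since it amounts to choosing, for each $i$, a uniformly random $\{\pm1\}$-vector of length $k$ with prescribed sum $w_i$ — and then output $T(w)\eqdef T'(x)$. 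With this choice, when $w\sim\mathcal{B}(0,k)^n$ and $x$ is drawn from the fiber, the marginal of $x$ is \emph{exactly} uniform on $\{\pm1\}^{kn}$; and since $Q(w)^2=Q'_k(x)^2$ on the nose (both sides real, as $Q$ has $\{0,1\}$ coefficients and $\mathcal{B}(0,k)$ is integer-supported), the event $\bigl|T'(x)-|Q'_k(x)|^2\bigr|>\epsilon mk^d$ coincides with $|T(w)-Q(w)^2|>\epsilon mk^d$. Hence this event has probability at most $\delta$ over the choice of $w$ and the internal coins, i.e.\ $T$ is an $(\epsilon mk^d)$-additive approximate $\delta$-average-case solution to $Q^2$ with respect to $\mathcal{B}(0,k)^n$. (The fiber re-randomization is the step that makes the failure probability transfer cleanly: $T'$ is promised to behave only on \emph{most} of $\{\pm1\}^{kn}$, so $T$ must hit each fiber with exactly its uniform conditional weight, not with an arbitrary representative.)

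Finally I would do the variance bookkeeping to replace $mk^d$ by $\var$. By the computation in Appendix~\ref{variance-section}, the variance of an Efficiently Specifiable polynomial with $M$ monomials at uniform $\mathbb{T}_\ell$-assignments equals $M$; applied to $Q'_k$ with $\ell=2$ and $M=mk^d$ this gives $\Var_{x\sim\{\pm1\}^{kn}}[Q'_k(x)]=mk^d$, and since $Q'_k(x)=Q(\pi(x))$ with $\pi(x)\sim\mathcal{B}(0,k)^n$ the left side is exactly $\var=\Var_{\mathcal{B}(0,k)^n}[Q]$. Thus $\epsilon mk^d=\epsilon\var$ and $T$ is the desired $\epsilon\var$-additive approximate $\delta$-average-case solution. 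For the running time, we only need $k\leq\poly(n)$ (already required by Lemma~\ref{kvaluedes}), so $kn=\poly(n)$ and the $\poly(kn,1/\epsilon,1/\delta)$ bound from Theorem~\ref{additive-m}, plus the efficient fiber sampling, is $\poly(n,1/\epsilon,1/\delta)$. I do not expect a genuine obstacle here — the content is entirely in chaining Theorems~\ref{kvaluesampling} and \ref{additive-m} through the map $\pi$ — and the only points demanding care are the fiber-re-randomization argument above and verifying that the substitution $X_i\mapsto\sum_j x_i^{(j)}$ neither collapses monomials (so the count is really $mk^d$) nor alters the evaluation.
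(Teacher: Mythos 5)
Your proposal is correct and follows essentially the same route as the paper: apply Theorem~\ref{additive-m} (with $\ell=2$) to the Efficiently Specifiable polynomial $Q'_k$, re-randomize each $\mathcal{B}(0,k)^n$ input over its fiber in $\{\pm 1\}^{kn}$ (the paper's ``orbit'' step) so the failure probability transfers exactly, and finish with the Appendix~\ref{variance-section} identity $\var = mk^d$. The one difference is that you make explicit two points the paper leaves tacit — that distinct monomials of $Q$ expand into disjoint monomial sets so $Q'_k$ really has $mk^d$ monomials with coefficient $1$, and why re-randomizing over the fiber (rather than taking an arbitrary representative) is what makes the average-case guarantee carry over — which is a welcome clarification but not a change of argument.
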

\begin{proof}
We begin by noting that $Q'_k$ is a polynomial of degree $d$ that has $kn$ variables and $m'=mk^d$ monomials.  By Theorem \ref{additive-m} we get that a Sampler with respect to $\mathcal{D}_{Q'_{k,2}}$ implies there exists $A$, an $\epsilon m'$-additive approximate $\delta$-average case solution to ${Q'_k}^2$ with respect to $U_{\{\pm 1\}^{kn}}$ that runs in time $poly(n,1/\epsilon,1/\delta)$ with access to an $\NP$ oracle.  We need to show the existence of an $A'$, an $\epsilon m'$-additive approximate $\delta$-average case solution to ${Q'_k}^2$ with respect to the $\mathcal{B}(0,k)^n$ distribution.  

We think of $A'$ as receiving an input, $z\in [-k,k]^n$ drawn from $\mathcal{B}(0,k)^n$.  $A'$ picks $y$ uniformly from the orbit of $z$ over $\{\pm 1\}^{kn}$ and outputs $A(y)$.  Now:

\begin{align}
\Pr_{z\sim\mathcal{B}(0,k)^n}\left[\left|A'(z)-Q^2(z)\right|\leq \epsilon m'\right]&=\Pr\limits_{z\sim \mathcal{B}(0,k)^n,y\sim_R {orbit(z)}}\left[\left|A(y)-Q^2(z)\right|\leq \epsilon m'\right]\\
&=\Pr_{y\sim U_{\lbrace\pm 1\rbrace^{kn}}}\left[\left|A(y)-Q'_k(y)\right|\leq \epsilon m'\right]\geq 1-\delta\\
\end{align}


Thus, because a uniformly chosen $\{\pm 1\}^{kn}$ assignment to the variables in $Q'_k$ induces a $\mathcal{B}(0,k)^n$ distributed assignment to the variables in $Q$, this amounts to an $\epsilon m'$-additive approximate $\delta$-average case solution to $Q^2$ with respect to $\mathcal{B}(0,k)^n$.  In Appendix \ref{variance-section} we prove that $\var$ is $m'$ as desired.
\end{proof}

\section{The ``Squashed" QFT}\label{sec-squashquantum}
In this section we begin to prove that Quantum Computers can sample efficiently from distributions with probabilities proportional to evaluations of Efficiently Specifiable polynomials at points in $[-k,k]^n$ for $k=exp(n)$.  Note that in the prior quantum algorithm of Section \ref{algorithm} we would need to invoke the QFT over $\mathbb{Z}_2^{kn}$, of dimension doubly-exponential in $n$.  Thus we need to define a new Polynomial Transform that can be obtained from the standard Quantum Fourier Transform over $\mathbb{Z}_2^n$, which we refer to as the ``Squashed QFT".  Now we describe the unitary matrix which implements the Squashed QFT. 
 
Consider the $2^k \times 2^k$ matrix $D_k$, whose columns are indexed by all possible $2^{k}$ multilinear monomials of the variables $x_1,x_2,...,x_k$ and the rows are indexed by the $2^k$ different $\{-1,+1\}$ assignments to the variables. The $(i,j)$-th entry is then defined to be the evaluation of the $j$-th monomial on the $i$-th assignment.  We note in passing that, defining $\bar{D}_k$ to be the matrix whose entries are the entries in $D_k$ normalized by $1/\sqrt{2^k}$ gives us the Quantum Fourier Transform matrix over $\mathbb{Z}_2^k$.  It is clear, by the unitarity of the Quantum Fourier Transform, that the columns (and rows) in $D_k$ are pairwise orthogonal.

Now we define the ``Elementary Symmetric Polynomials":
\begin{definition}[Elementary Symmetric Polynomials]  We define the $j$-th Elementary Symmetric Polynomial on $k$ variables for $j\in[0,k]$ to be:
$$p_j(X_1,X_2,...,X_k)=\sum\limits_{1\leq \ell_1 < \ell_2< ... < \ell_j\leq k}X_{\ell_1}X_{\ell_2}...X_{\ell_j}$$
\end{definition}

In this work we will care particularly about the first two elementary symmetric polynomials, $p_0$ and $p_1$ which are defined as $p_0(X_1,X_2,...,X_k)=1$ and $p_1(X_1,X_2,...,X_k)=\sum\limits_{1\leq \ell \leq k}X_\ell$.

Consider the $(k+1) \times (k+1)$ matrix, $\tilde{D}_k$, whose columns are indexed by elementary symmetric polynomials on $k$ variables and whose rows are indexed by equivalence classes of assignments in $\mathbb{Z}_2^k$ under $S_k$ symmetry.  We obtain $\tilde{D}_k$ from $D_{k}$ using two steps.\\

First obtain a $2^{k} \times (k+1)$ rectangular matrix $\tilde{D}^{(1)}_k$ whose rows are indexed by assignments to the variables $x_1,x_2,...,x_k\in\{\pm 1\}^k$ and columns are the entry-wise sum of the entries in each column of $D_k$ whose monomial is in each respective elementary symmetric polynomial.\\

Then obtain the final $(k+1)\times (k+1)$ matrix $\tilde{D}_k$ by taking $\tilde{D}^{(1)}_k$ and keeping only one representative row in each equivalence class of assignments under $S_k$ symmetry.  We label the equivalence classes of assignments under $S_k$ symmetry $o_0,o_1,o_2,...,o_k$ and note that for each $i\in [k]$, $|o_i|=\binom{k}{i}$.  Observe that $\tilde{D}_k$ is precisely the matrix whose $(i,j)$-th entry is the evaluation of the $j$-th symmetric polynomial evaluated on an assignment in the $i$-th symmetry class.\\

\begin{theorem}
The columns in the matrix $\tilde{D}^{(1)}_k$ are pairwise orthogonal.
\end{theorem}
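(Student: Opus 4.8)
The plan is to derive this directly from the orthogonality of the columns of $D_k$, which is already recorded in the excerpt. First I would set up notation: index the columns of $D_k$ by subsets $S \subseteq [k]$, so that column $S$ corresponds to the multilinear monomial $\prod_{i \in S} x_i$ and its entry in the row indexed by an assignment $a \in \{-1,+1\}^k$ is $\chi_S(a) := \prod_{i \in S} a_i$. The stated orthogonality of $D_k$ then reads
$$\langle \mathrm{col}_S(D_k),\ \mathrm{col}_T(D_k)\rangle = \sum_{a \in \{-1,+1\}^k} \chi_S(a)\,\chi_T(a) = 2^k\, \delta_{S,T},$$
where $\delta_{S,T}$ is $1$ if $S = T$ and $0$ otherwise.

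Next I would record the structural observation driving the proof: every multilinear monomial $\prod_{i\in S} x_i$ occurs in exactly one elementary symmetric polynomial, namely $p_{|S|}$. Hence, by the construction of $\tilde{D}^{(1)}_k$, the column indexed by $p_j$ (for $j \in \{0,1,\dots,k\}$) is precisely $\sum_{S \subseteq [k],\,|S|=j} \mathrm{col}_S(D_k)$ — equivalently, the column whose $a$-th entry is $p_j(a_1,\dots,a_k)$. In particular, the $k+1$ index sets $\{S : |S| = j\}$ partition the $2^k$ columns of $D_k$ into disjoint blocks indexed by degree.

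Finally I would expand the inner product of two distinct columns of $\tilde{D}^{(1)}_k$ by bilinearity: for $j \neq j'$,
$$\langle \mathrm{col}_{p_j}(\tilde{D}^{(1)}_k),\ \mathrm{col}_{p_{j'}}(\tilde{D}^{(1)}_k)\rangle = \sum_{|S|=j}\ \sum_{|T|=j'} \langle \mathrm{col}_S(D_k),\ \mathrm{col}_T(D_k)\rangle = \sum_{|S|=j}\ \sum_{|T|=j'} 2^k\, \delta_{S,T} = 0,$$
since a set $S$ of size $j$ can never equal a set $T$ of size $j' \neq j$, so every term in the double sum vanishes. This establishes the pairwise orthogonality claimed.

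There is essentially no genuine obstacle here; the only point that needs a little care is the middle step — checking that grouping monomials by elementary symmetric polynomial really does partition the columns of $D_k$ into disjoint degree-blocks — after which the conclusion is immediate from bilinearity and the termwise vanishing of cross inner products between distinct columns of $D_k$. (One could alternatively compute $\sum_a p_j(a)\,p_{j'}(a)$ head-on, but routing through the already-established orthogonality of $D_k$ sidesteps any explicit symmetric-polynomial identities.)
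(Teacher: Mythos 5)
Your proof is correct and follows essentially the same route as the paper: expand the inner product of two columns of $\tilde{D}^{(1)}_k$ by bilinearity into a double sum of inner products of columns of $D_k$, each of which vanishes. You are slightly more careful than the paper in making explicit that the monomials grouped into distinct elementary symmetric polynomials form disjoint sets of columns of $D_k$ (so no diagonal terms $\langle u_i, u_i\rangle = 2^k$ appear), a point the paper's one-line computation leaves implicit.
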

\begin{proof}
Note that each column in the matrix $\tilde{D}^{(1)}_k$ is the sum of columns in $D_k$ each of which are orthogonal. We can prove this theorem by observing that if we take any two columns in $D^{(1)}_k$, called $c_1,c_2$, where $c_1$ is the sum of columns $\{u_i\}$ of $D_k$ and $c_2$ is the sum of columns $\{v_i\}$ of $D_k$.  The inner product, $\langle c_1,c_2 \rangle$ can be written:
$$\langle \sum\limits_{i} u_i,\sum\limits_{j} v_j\rangle=\sum\limits_{i,j}\langle u_i,v_j\rangle=0$$

\end{proof}

\begin{theorem}
Let $L$ be the $(k+1) \times (k+1)$ diagonal matrix with $i$-th entry equal to $\sqrt{o_i}$.  Then the columns of $L\cdot\tilde{D}_k$ are orthogonal.
\end{theorem}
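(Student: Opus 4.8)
The plan is to reduce the orthogonality of the columns of $L\cdot\tilde{D}_k$ to the orthogonality of the columns of $\tilde{D}^{(1)}_k$, which was just established, by exploiting the fact that the elementary symmetric polynomials are invariant under the action of $S_k$.

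First I would observe that every column of $\tilde{D}^{(1)}_k$ is constant on each $S_k$-equivalence class $o_i$ of assignments. Indeed, the $j$-th column of $\tilde{D}^{(1)}_k$ is, by construction, the entry-wise sum of the columns of $D_k$ corresponding to the monomials appearing in $p_j$; since $p_j$ is a symmetric function, its value on a $\{\pm 1\}$-assignment depends only on which class $o_i$ that assignment lies in. Writing $c(o_i)$ for the common value of a column $c$ of $\tilde{D}^{(1)}_k$ on the class $o_i$, the ordinary inner product of two columns $c,c'$ therefore collapses to a weighted sum over the $k+1$ classes,
$$\langle c, c'\rangle = \sum_{i=0}^{k} |o_i|\, c(o_i)\, c'(o_i),$$
because each class $o_i$ contributes $|o_i| = \binom{k}{i}$ identical terms.

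Next I would identify this weighted sum with an ordinary inner product of columns of $L\cdot\tilde{D}_k$. By construction $\tilde{D}_k$ retains exactly one representative row from each class $o_i$, so its $(i,j)$-entry is $c_j(o_i)$, where $c_j$ denotes the $j$-th column of $\tilde{D}^{(1)}_k$; and left-multiplication by the diagonal matrix $L$ with $i$-th diagonal entry $\sqrt{|o_i|}$ scales the row indexed by $o_i$ by $\sqrt{|o_i|}$. Hence the $(i,j)$-entry of $L\cdot\tilde{D}_k$ is $\sqrt{|o_i|}\,c_j(o_i)$, and for $j \neq j'$,
$$\langle (L\tilde{D}_k)_{\cdot,j},\,(L\tilde{D}_k)_{\cdot,j'}\rangle = \sum_{i=0}^{k} \sqrt{|o_i|}\,c_j(o_i)\cdot\sqrt{|o_i|}\,c_{j'}(o_i) = \sum_{i=0}^{k} |o_i|\,c_j(o_i)\,c_{j'}(o_i) = \langle c_j, c_{j'}\rangle = 0,$$
the last equality being the preceding theorem. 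This gives the claim.

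The argument is essentially bookkeeping, so I do not anticipate a genuine obstacle; the one point requiring care is checking that the representative rows of $\tilde{D}_k$ are indexed by $o_0,\dots,o_k$ in the same order as the diagonal entries of $L$, so that the scaling $\sqrt{|o_i|}$ is indeed applied to the row carrying the values on class $o_i$ — but this is immediate from the definitions. The only conceptual step is the "constant on classes" observation, whose justification is that permuting the variables permutes the monomials of $p_j$ among themselves while simultaneously permuting the rows within a class, leaving each column entry on that class fixed.
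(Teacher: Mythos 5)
Your proof is correct and follows essentially the same route as the paper: both observe that each column of $\tilde{D}^{(1)}_k$ is constant on $S_k$-orbits (since the elementary symmetric polynomials are symmetric), collapse the $2^k$-term inner product into a weighted sum $\sum_{i=0}^{k}|o_i|\,c_j(o_i)\,c_{j'}(o_i)$, and recognize this as the inner product of columns of $L\cdot\tilde{D}_k$. Your write-up is somewhat more explicit about the bookkeeping, but the underlying argument is identical.
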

\begin{proof}
Note that the value of the symmetric polynomial at each assignment in an equivalence class is the same.  We have already concluded the orthogonality of columns in $\tilde{D}^{(1)}_k$.  Therefore if we let $a$ and $b$ be any two columns in the matrix $\tilde{D}_k$, and their respective columns be $\bar{a},\bar{b}$ in $\tilde{D}^{(1)}_k$, we can see:

$$\sum\limits_{i=0}^{k}\left(a_ib_i|o_i|\right)=\sum\limits_{i=0}^{2^k}\bar{a}_i\bar{b}_i=0$$

From this we conclude that the columns of the matrix $L\cdot\tilde{D}_k$, in which the $i$-th row of $\tilde{D}_k$ is multiplied by $\sqrt{o_i}$, are orthogonal.
\end{proof}

\begin{theorem}
We have just established that the columns in the matrix $L\cdot\tilde{D}_k$ are orthogonal.  Let the $k+1 \times k+1$ diagonal matrix $R$ be such that so that the columns in $L\cdot\tilde{D}_k\cdot R$ are orthonormal, and thus $L\cdot\tilde{D}_k\cdot R$ is unitary.  Then the first two nonzero entries in $R$, which we call $r_0,r_1$, corresponding to the normalization of the column pertaining to the zero-th and first elementary symmetric polynomial, are $1/\sqrt{2^k}$ and $\frac{1}{\sqrt{\sum\limits_{i=0}^k\left[\binom{k}{i}\left(k-2i\right)^2\right]}}$. 
\end{theorem}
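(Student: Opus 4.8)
The plan is to observe first that, in the preceding theorem, the diagonal matrix $R$ that makes $L\cdot\tilde D_k\cdot R$ unitary is forced: since the columns of $L\tilde D_k$ are already pairwise orthogonal (and each is nonzero, as otherwise no rescaling to a unitary could exist), the $j$-th diagonal entry of $R$ must be $r_j = 1/\bigl\|(L\tilde D_k)_{\cdot,j}\bigr\|_2$, the reciprocal Euclidean norm of the $j$-th column of $L\tilde D_k$. So it suffices to compute this one norm for the two columns indexed by the elementary symmetric polynomials $p_0$ and $p_1$.

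Next I would evaluate those norms orbit-by-orbit. Index the rows of $\tilde D_k$ by the $S_k$-orbits $o_0,o_1,\dots,o_k$, where $o_i$ is the set of $\{\pm1\}^k$ assignments with exactly $i$ coordinates equal to $-1$, so that $|o_i| = \binom{k}{i}$ and $L$ has $i$-th diagonal entry $\sqrt{\binom{k}{i}}$. Because each $p_j$ is symmetric, it takes a single value on the orbit $o_i$, realized by the representative with $k-i$ coordinates equal to $+1$ and $i$ equal to $-1$; call this value $p_j(o_i)$. Then the $(i,j)$ entry of $L\tilde D_k$ is $\sqrt{\binom{k}{i}}\,p_j(o_i)$, and hence
$$\bigl\|(L\tilde D_k)_{\cdot,j}\bigr\|_2^2 \;=\; \sum_{i=0}^k \binom{k}{i}\,p_j(o_i)^2,$$
which is just the diagonal case $a = b$ of the weighted-inner-product identity $\sum_i a_i b_i |o_i| = \langle\bar a,\bar b\rangle$ already used to prove orthogonality of the columns of $L\tilde D_k$.

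Finally I would substitute the two easy evaluations. For $p_0 \equiv 1$ the sum is $\sum_{i=0}^k\binom{k}{i} = 2^k$, giving $r_0 = 1/\sqrt{2^k}$. For $p_1 = \sum_{\ell} x_\ell$, the representative of $o_i$ has $k-i$ coordinates equal to $+1$ and $i$ equal to $-1$, so $p_1(o_i) = (k-i) - i = k - 2i$; the sum is therefore $\sum_{i=0}^k\binom{k}{i}(k-2i)^2$ and $r_1 = 1/\sqrt{\sum_{i=0}^k\binom{k}{i}(k-2i)^2}$, exactly as claimed. As a consistency check, $2^{-k}\sum_i\binom{k}{i}(k-2i)^2 = \E\bigl[(\sum_\ell\epsilon_\ell)^2\bigr] = k$ for i.i.d.\ Rademacher $\epsilon_\ell$, so equivalently $r_1 = 1/\sqrt{k\,2^k}$.

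I do not expect a genuine obstacle here. The only thing to be careful about is the combinatorial bookkeeping — that the squared column norm of $L\tilde D_k$ decomposes as the $\binom{k}{i}$-weighted sum of squared symmetric-polynomial values over the $k+1$ orbits — together with the two one-line evaluations $p_0(o_i) = 1$ and $p_1(o_i) = k - 2i$. The sign convention in which orbit counts as the "$i$-th" one is immaterial since only $r_1^2$ enters the normalization.
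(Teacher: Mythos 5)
Your argument matches the paper's: both compute the squared $\ell_2$-norm of the relevant column of $L\tilde D_k$ as $\sum_{i=0}^k\binom{k}{i}p_j(o_i)^2$, evaluate $p_0(o_i)=1$ and $p_1(o_i)=k-2i$, and take reciprocal square roots. Your closing observation that the sum for $p_1$ simplifies to $k\cdot 2^k$ (so $r_1=1/\sqrt{k\,2^k}$) is a correct and pleasant simplification the paper leaves implicit.
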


\begin{proof}
First we calculate $r_0$.  Since we wish for a unitary matrix, we want the $\ell_2$ norm of the first column of $\tilde{D}_k$ to be 1, and so need:
$$r_0^2\sum\limits_{i=0}^k\left(\sqrt{o_i}\right)^2=r_0^2\sum\limits_{i=0}^k\binom{k}{i}=1$$

And so $r_0$ is $1/\sqrt{2^k}$ as desired.\\

Now we calculate $r_1$, the normalization in the column of $\tilde{D}_k$ corresponding to the first elementary symmetric polynomial.  Note that in i-th equivalence class of assignments we have exactly $i$ negative ones and $k-i$ positive ones.  Thus the value of the first symmetric polynomial is the sum of these values, which for the $i-th$ equivalence class is precisely $k-2i$.  Then we note the normalization in each row is $\sqrt{\binom{k}{i}}$.  Thus we have
$$r_1^2\sum\limits_{i=0}^k{\left[\sqrt{\binom{k}{i}}\left(k-2i\right)\right]^2}=1$$
Thus $r_1=\frac{1}{\sqrt{\sum\limits_{i=0}^k\left[\binom{k}{i}\left(k-2i\right)^2\right]}}$ as desired.

\end{proof}


\section{Using our ``Squashed QFT" to Quantumly Sample from Distributions of Efficiently Specifiable Polynomial Evaluations}\label{sec-quantumsquash}

In this section we use the unitary matrix developed earlier to quantumly sample distributions with probabilities proportional to evaluations of Efficiently Specifiable polynomials at points in $[-k,k]^n$ for $k=exp(n)$.  Here we assume that we have an efficient quantum circuit decomposition for this unitary.  The prospects for this efficient decomposition are discussed in Section \ref{sec-summary}.

For convenience, we'll define a map $\psi:[-k,k]\rightarrow [0,k]$, for $k$ even, with 
\[\psi(y)= \left\{
  \begin{array}{lr}
    \frac{k+y}{2} &  \: if \: y \: is \: even\\
    0 &  otherwise
  \end{array}
\right. \]

\begin{definition}
Suppose $Q$ is an Efficiently Specifiable polynomial $Q$ with $n$ variables and $m$ monomials, and, for $k\leq exp(n)$, let $Q'_{k}$ be its $k$-valued equivalent polynomial.  Let $\var$ be the variance of the distribution over $\mathbb{R}$ induced by $Q$ with assignments to the variables distributed over $\mathcal{B}(0,k)^n$ (or equivalently, we can talk about $\Var[Q'_k]$ where each variable in $Q'_k$ is independently uniformly chosen from $\{\pm 1\}$), as calculated in Appendix \ref{variance-section}.  Then we define the of distribution $\disttwo $ over $n$ tuples of integers in $[-k,k]$ by:
$$\Pr_{\disttwo}[y]=\frac{Q(y)^2{{\neworbitsize}}}{2^{kn}\var}$$
\end{definition} 
\begin{theorem}
By applying $(L\cdot\tilde{D}_k\cdot R)^{\otimes n}$ in place of the Quantum Fourier Transform over $\mathbb{Z}_2^n$ in Section \ref{algorithm} we can efficiently quantumly sample from $\disttwo$.
\end{theorem}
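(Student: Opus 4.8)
The plan is to run the algorithm of Theorem~\ref{essampling} essentially unchanged, with three modifications: (i) each of the $n$ single-qubit ``slots'' becomes a register of $\lceil\log(k+1)\rceil$ qubits indexing the ``monomial basis'' $\{|p_0\rangle,\dots,|p_k\rangle\}$ (resp.\ the ``assignment basis'' $\{|o_0\rangle,\dots,|o_k\rangle\}$) of $U:=L\cdot\tilde D_k\cdot R$; (ii) the state-preparation step produces the uniform superposition over the $m$ monomials of $Q$ rather than over the $m'=mk^d$ monomials of $Q'_k$; and (iii) $U^{\otimes n}$ replaces $H^{\otimes n}$. Concretely, I would first invoke Lemma~\ref{uniformfunction} with the efficiently computable, efficiently invertible map $z\mapsto M(z)$, where $M(z)\in(\mathbb C^{k+1})^{\otimes n}$ puts $|p_1\rangle$ in register $s$ when $h(z)_s=1$ and $|p_0\rangle$ otherwise, to prepare $\tfrac{1}{\sqrt m}\sum_{z\in[m]}|M(z)\rangle$; then apply $U^{\otimes n}$; then measure in the assignment basis, obtaining some $(i_1,\dots,i_n)$; and finally output $y=(k-2i_1,\dots,k-2i_n)\in[-k,k]^n$.

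The core of the proof will be a short amplitude calculation. From the definitions of $L$, $\tilde D_k$, $R$, the $(i,j)$ entry of $U$ is $\sqrt{\binom ki}\,p_j(\text{class }o_i)\,r_j$, so $U|p_0\rangle=r_0\sum_i\sqrt{\binom ki}\,|o_i\rangle$ and $U|p_1\rangle=r_1\sum_i\sqrt{\binom ki}\,(k-2i)\,|o_i\rangle$, using that the first elementary symmetric polynomial on the class with $i$ negative ones equals $k-2i$. Tensoring the $n$ registers of $|M(z)\rangle$ and summing over $z$ (all $d=\deg Q$ of the ``$p_1$'' factors and all $n-d$ of the ``$p_0$'' factors contribute the same $z$-independent $r_1$'s and $r_0$'s, since $Q$ is homogeneous), the amplitude on $|o_{i_1}\cdots o_{i_n}\rangle$ is $\tfrac{1}{\sqrt m}\,r_0^{\,n-d}r_1^{\,d}\bigl(\prod_s\sqrt{\binom k{i_s}}\bigr)\sum_z\prod_{s:\,h(z)_s=1}(k-2i_s)$. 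With $v_s:=k-2i_s$ the inner sum is $\sum_z\prod_{s:\,h(z)_s=1}v_s=Q(v_1,\dots,v_n)$ because $Q$ has $\{0,1\}$ coefficients, so the probability of outcome $y=(v_1,\dots,v_n)$ is a fixed ($y$-independent) constant times $Q(y)^2\,\neworbitsize$ (as $|o_{i_s}|=\binom k{i_s}=\binom k{\psi(y_s)}$). Since $U^{\otimes n}$ is unitary these probabilities sum to $1$, and $\sum_y Q(y)^2\,\neworbitsize=2^{kn}\var$ because $\neworbitsize/2^{kn}$ is the $\mathcal B(0,k)^n$-probability of $y$ and $Q$, being homogeneous of positive degree, has mean $0$ there so its second moment equals $\var$. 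Hence the output distribution is exactly $Q(y)^2\,\neworbitsize/(2^{kn}\var)=\Pr_{\disttwo}[y]$. If one prefers not to use the normalization shortcut, one can instead substitute the explicit values $r_0^2=1/2^k$ and $r_1^2=1/(k2^k)$ — the latter from $\sum_i\binom ki(k-2i)^2=k2^k$ — to pin the constant down directly.

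For efficiency: the total register width is $n\lceil\log(k+1)\rceil\le\poly(n)$ since $k\le\exp(n)$, the state preparation is efficient by Lemma~\ref{uniformfunction}, and the final measurement together with the map $(i_1,\dots,i_n)\mapsto(k-2i_1,\dots,k-2i_n)$ are trivial. The one step I cannot make unconditionally efficient, and which I expect to be the real obstacle, is applying $U^{\otimes n}$: this needs a $\poly(n)$-size quantum circuit for the $(k+1)\times(k+1)$ unitary $U=L\tilde D_k R$, which a generic unitary on $\poly(n)$ qubits does not possess. In keeping with the standing assumption of Section~\ref{sec-quantumsquash}, I would take this decomposition as a hypothesis; the remainder of the argument is then elementary and shows that $\disttwo$ is sampled in $\poly(n)$ time. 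The plausibility of such a decomposition is exactly what Section~\ref{sec-summary} takes up.
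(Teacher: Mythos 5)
Your proposal is correct and takes essentially the same approach as the paper's proof: prepare $\tfrac{1}{\sqrt m}\sum_{z\in[m]}|h(z)\rangle$ via Lemma~\ref{uniformfunction}, reinterpret each coordinate as a $(k+1)$-dimensional register in the $\{p_0,p_1\}$ basis, apply $(L\tilde D_k R)^{\otimes n}$, and observe that the amplitude on $y$ is $r_0^{n-d}r_1^{d}Q(y)\sqrt{\neworbitsize}/\sqrt m$, whose square is $\Pr_{\disttwo}[y]$ once $r_0,r_1$ and $\var$ are substituted. You merely spell out the amplitude derivation that the paper states without computation (and offer an optional unitarity-based normalization argument), while correctly flagging the efficient-decomposition hypothesis as the section's standing assumption.
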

\begin{proof}
Since we are assuming $Q$ is Efficiently Specifiable, let $h:[m]\rightarrow \{0,1\}^{n}$ be the invertible function describing the variables in each monomial.  We start by producing the state over $k+1$ dimensional qudits: $$\frac{1}{\sqrt{m}}\sum\limits_{z\in [m]} |h(z)\rangle$$ Which we prepare via the procedure described in Lemma \ref{uniformfunction}.\\

Instead of thinking of $h$ as mapping an index of a monomial from $[m]$ to the variables in that monomial, we now think of $h$ as taking an index of a monomial in $Q$ to a polynomial expressed in the $\{1,x^{(1)}+x^{(2)}+...+x^{(k)}\}^n$ basis.\\

Now take this state and apply the unitary (which we assume can be realized by an efficient quantum circuit) $(L\cdot\tilde{D}_k\cdot R)^{\otimes n}$.\\  

Notice each $y\in[-k,k]^{n}$ has an associated amplitude: $$\alpha_y=\frac{r_0^{n-d}r_1^{d}Q(y)\sqrt{\neworbitsize}}{\sqrt{m}}$$
Letting $p_y=\Pr_{\disttwo}[y]$, note that, by plugging in $r_0,r_1$ from Section \ref{sec-squashquantum}:
\begin{align*}
\alpha_y^2 &=\frac{Q(y)^2{\neworbitsize}r_0^{2(n-d)}r_1^{2d}}{m}\\
&=\frac{Q(y)^2{\neworbitsize}}{{m{2^{k(n-d)}}\left(\sum\limits_{i=0}^k\left[\binom{k}{i}\left(k-2i\right)^2\right]\right)^d}}\\&=\frac{Q(y)^2{{\neworbitsize}}}{2^{kn-kd}\var 2^{kd}}=\frac{Q(y)^2{{\neworbitsize}}}{2^{kn}\var}=p_y
\end{align*}
\end{proof}

\section{The Hardness of Classical Sampling from the Squashed Distribution}\label{sec-classicalsquash}
In this section, as before, we use Stockmeyer's Theorem (Theorem \ref{stockmeyer}), together with the assumed existence of a Sampler for $\disttwo$ to obtain hardness consequences for classical sampling with $k\leq exp(n)$.
\begin{theorem}  Given an Efficiently Specifiable polynomial $Q$ with $n$ variables and $m$ monomials, let $Q'_k$ be its $k$-valued equivalent polynomial, for some fixed $k\leq exp(n)$.  Suppose we have a Sampler $S$ with respect to our quantumly sampled distribution class, $\disttwo$, and let $\var$ denote the variance of the distribution over $\mathbb{R}$ induced by $Q$ with assignments distributed from $\mathcal{B}(0,k)^n$.  Then we can find a randomized procedure $T:\mathbb{R}^n\rightarrow\mathbb{R}$, an $\epsilon\var$-additive approximate $\delta$-average case solution to $Q^2$ with respect to $\mathcal{B}(0,k)^n$ that runs in time $poly(n,1/\epsilon,1/\delta)$ with access to an $\NP$ oracle. 
\end{theorem}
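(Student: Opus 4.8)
The plan is to mirror the argument of Theorem \ref{additive-m} exactly as it was adapted in the previous section to the $\mathcal{B}(0,k)^n$ distribution, but now using the Squashed distribution $\disttwo$ in place of $\mathcal{D}_{Q'_k,2}$. The key point is that the Squashed construction was engineered so that $\Pr_{\disttwo}[y]$ is \emph{already} proportional to $Q(y)^2$ for $y \in [-k,k]^n$, with the constant of proportionality being exactly $\neworbitsize / (2^{kn}\var)$. So the $k$-valued-equivalent detour through $\{\pm 1\}^{kn}$ that was needed in Section \ref{sec-non-squash} (where $k = exp(n)$ would have made the ambient group doubly exponential) is unnecessary here: the Squashed QFT has done the averaging over orbits for us, and we can read off an estimate of $Q(y)^2$ directly from an estimate of $p_y$.

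First I would set $\beta = \epsilon\delta/16$ and suppose the Sampler $S$ produces a distribution $\mathcal{D}'$ with $\|\disttwo - \mathcal{D}'\| \leq \beta$; write $p_y = \Pr_{\disttwo}[y]$ and $q_y = \Pr_{\mathcal{D}'}[y]$. The procedure $T$ picks $y \in [-k,k]^n$ by sampling from $\mathcal{B}(0,k)^n$ --- note this is \emph{not} the uniform distribution on $[-k,k]^n$, which is exactly why the orbit-size factor $\neworbitsize$ appears in $p_y$ and makes the two distributions match up; I will want to check that $\E_{y\sim\mathcal{B}(0,k)^n}[p_y]$ and $\E_{y\sim\mathcal{B}(0,k)^n}[q_y]$ both come out to $2^{-kn}$-type quantities so that Markov's inequality applies with the right normalization. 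Then I would run Stockmeyer's algorithm (Theorem \ref{stockmeyer}) on the randomness of $S$ with an $\NP$ oracle to get a multiplicative estimate $\tilde q_y$ of $q_y$ with failure probability $2^{-n}$, combine the three error sources (Stockmeyer error, $|q_y - p_y|$ via Markov on $\Delta_y$, and the tail of $q_y$ via Markov) by a union bound exactly as in Theorem \ref{additive-m} with $k = 4/\delta$ and $\gamma = \epsilon\delta/8$, and conclude $\Pr_y[|\tilde q_y - p_y| > \epsilon/(\text{normalizer})] \leq \delta$. Finally, multiplying $\tilde q_y$ by $2^{kn}\var / \neworbitsize$ converts the additive estimate of $p_y$ into an $\epsilon\var$-additive estimate of $Q(y)^2$; here I should confirm that the orbit-size factors cancel cleanly so that the additive error scales to precisely $\epsilon\var$, and invoke Appendix \ref{variance-section} for the value of $\var$.

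The one genuinely new wrinkle compared to Theorem \ref{additive-m} is that the relevant distribution for the "pick a random $y$" step is $\mathcal{B}(0,k)^n$ rather than the uniform distribution on the character group, so the expectations feeding Markov's inequality must be recomputed with the binomial weights; but the orbit-size factor $\neworbitsize$ in the definition of $p_y$ is exactly the Radon--Nikodym derivative that makes $\sum_y p_y = 1$ under the \emph{uniform} count while $\E_{\mathcal{B}(0,k)^n}$ reweights consistently, so the bookkeeping goes through. I expect the main obstacle to be purely notational --- keeping the three normalization constants ($2^{kn}$, $\var$, and $\neworbitsize$) straight through the Markov and union-bound steps so that the final additive error is $\epsilon\var$ and not off by one of these factors --- rather than anything conceptually deep. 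There is also a small point that Stockmeyer's theorem as stated is about $\Pr_x[f(x) = y]$ for a Boolean-output function, so I would note that the Sampler's output distribution over $[-k,k]^n$ encoded as bit strings fits this template, amplifying the success probability to $1 - 2^{-n}$ as before.
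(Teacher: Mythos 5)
Your plan is essentially the paper's proof. The only place where I'd push back is on the specific expectations you propose to compute. You write that you want $\E_{y\sim\mathcal{B}(0,k)^n}[p_y]$ and $\E_{y\sim\mathcal{B}(0,k)^n}[q_y]$ to come out to ``$2^{-kn}$-type quantities''; but as written these expectations do not simplify to $2^{-kn}$, since $\E_{y\sim\mathcal{B}(0,k)^n}[q_y]=\sum_y\frac{\neworbitsize}{2^{kn}}q_y$ carries the binomial weight without cancellation. The quantities Markov must be applied to are the orbit-normalized ones, $\frac{|p_y-q_y|}{\neworbitsize}$ and $\frac{q_y}{\neworbitsize}$, for which
\[
\E_{y\sim\mathcal{B}(0,k)^n}\left[\frac{|p_y-q_y|}{\neworbitsize}\right]=\frac{1}{2^{kn}}\sum_y|p_y-q_y|=\frac{2\beta}{2^{kn}},
\qquad
\E_{y\sim\mathcal{B}(0,k)^n}\left[\frac{q_y}{\neworbitsize}\right]=\frac{1}{2^{kn}}.
\]
You clearly sense that the orbit factor is the ``Radon--Nikodym derivative'' mediating between the two pictures, and you recover the right endgame (multiplying $\tilde q_y$ by $2^{kn}\var/\neworbitsize$), so this is a slip in the middle rather than a conceptual error, but it does need to be fixed: Markov applied to $|p_y-q_y|$ itself could see an exponentially large expectation.

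The paper handles exactly this bookkeeping by a change of variables: it introduces $\phi:\{\pm 1\}^{kn}\to[-k,k]^n$ sending an assignment to its orbit, observes that the uniform distribution on $x\in\{\pm 1\}^{kn}$ pushes forward to $\mathcal{B}(0,k)^n$, and defines $\Delta_x=\frac{|p_{\phi(x)}-q_{\phi(x)}|}{\orbitsize}$. Since each $y$ has $\neworbitsize$ preimages, summing $\Delta_x$ over all $x$ recovers $\sum_y|p_y-q_y|=2\beta$, so $\E_x[\Delta_x]=2\beta/2^{kn}$ with no further computation. This is identical to what you want, just with the importance weights written as a uniform count upstairs. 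So: you do not really get to skip the lift to $\{\pm 1\}^{kn}$ --- you just rename it --- and writing it as $\phi$ makes the Markov expectations drop out by inspection rather than by tracking the three constants by hand. I'd recommend adopting that device when you write out the details; the rest of your sketch (Stockmeyer with $2^{-n}$ amplified failure, union bound with $j=4/\delta$, $\gamma=\epsilon\delta/8$, final rescaling to $\epsilon\var$-additive error on $Q^2$) matches the paper.
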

\begin{proof}
Setting $\beta=\epsilon\delta/16$, suppose $S$ samples from a class of distributions $\mathcal{D}'$ so that $\|\disttwo-\mathcal{D}'\|\leq\beta$.  Let $q_{y}=\Pr_{\mathcal{D}'}[y]$.\\

We define $\phi:\{\pm 1\}^{kn}\rightarrow [-k,k]^n$ to be the map from each $\{\pm 1\}^{kn}$ assignment to its equivalence class of assignments, which is $n$ blocks of even integral values in the interval $[-k,k]$.  Note that, given a uniformly random $\{\pm 1\}^{kn}$ assignment, $\phi$ induces the $\mathcal{B}(0,k)$ distribution over $[-k,k]^n$.\\

Our procedure picks a $y\in[-k,k]^{n}$ distributed\footnote{We can do this when $k=exp(n)$ by approximately sampling from the Normal distribution, with only $\poly(n)$ bits of randomness, and using this to approximate $\mathcal{B}(0,k)$ to within additive error 1/$\poly(n)$ e.g., \cite{boxmuller,normal}.} via $\mathcal{B}(0,k)^n$, and outputs an estimate $\tilde{q}_{y}$.  Equivalently, we analyze this procedure by considering a uniformly distributed $x\in\{\pm 1\}^{kn}$ and then returning an approximate count, $\tilde{q}_{\phi(x)}$ to $q_{\phi(x)}$.  We prove that our procedure runs in time $poly(n,1/\epsilon,1/\delta)$ with the guarantee that:

$$\Pr_x\left[\frac{|\tilde{q}_{\phi(x)}-p_{\phi(x)}|}{\orbitsize} > \frac{\epsilon}{2^{kn}}\right] \leq\delta$$\\

And by our above analysis of the quantum sampler:
 $$p_{\phi(x)}=\frac{Q(\phi(x))^2\orbitsize}{2^{kn}\var}$$\\

Note that: $\frac{1}{2}\sum\limits_{y\in [-k,+k]^n}\left|p_y-q_y\right|\leq \beta$, which, in terms of $x$, because we are summing over all strings in the orbit under $(S_k)^n$ symmetry, can be written:
$$\frac{1}{2}\sum\limits_{x\in \{\pm 1\}^{kn}}\frac{\left|p_{\phi(x)}-q_{\phi(x)}\right|}{\orbitsize}\leq \beta$$\\

First we define for each $x$, $\Delta_x=\frac{|p_{\phi(x)}-q_{\phi(x)}|}{\orbitsize}$ and so $\|\disttwo-\mathcal{D}'\|=\frac{1}{2}\sum\limits_x \Delta_x$. \\

Note that:

$$\E_x[\Delta_x]=\frac{\sum\limits_x \Delta_x}{2^{kn}}=\frac{2\beta}{2^{kn}}$$

And applying Markov, $\forall j>1$,

$$\Pr_x[\Delta_x>\frac{j2\beta}{2^{kn}}]<\frac{1}{j}$$

Setting $j=\frac{4}{\delta}, \beta=\frac{\epsilon\delta}{16}$, we have,

$$\Pr_x[\Delta_x>\frac{\epsilon}{2}\cdot\frac{1}{2^{kn}}]<\frac{\delta}{4}$$

Then use approximate counting (with an $\NP$ oracle), using Theorem \ref{stockmeyer} on the randomness of $S$ to obtain an output $\tilde{q}_y$ so that, for all $\gamma>0$, in time polynomial in $n$ and $\frac{1}{\gamma}$:

$$\Pr[|\tilde{q}_y-q_y| >\gamma\cdot q_y]<\frac{1}{2^{n}}$$
Because we can amplify the failure probability of Stockmeyer's algorithm to be inverse exponential.\\

Equivalently in terms of $x$:
$$\Pr_x\left[\frac{|\tilde{q}_{\phi(x)}-q_{\phi(x)}|}{\orbitsize} >\gamma\cdot \frac{q_{\phi(x)}}{\orbitsize }\right]<\frac{1}{2^{n}}$$

And we have:

$$\E_x\left[\frac{q_{\phi(x)}}{\orbitsize}\right]\leq\frac{\sum\limits_x\frac{q_{\phi(x)}}{{\orbitsize}}}{2^{kn}}=\frac{1}{2^{kn}}$$

Thus, by Markov,

$$\Pr_x[\frac{q_{\phi(x)}}{{{\orbitsize}}}>\frac{j}{2^{kn}}]<\frac{1}{j}$$\\

Now, setting $\gamma=\frac{\epsilon\delta}{8}$ and applying the union bound:

\begin{align*}
\Pr_x & \left[\frac{\left|\tilde{q}_{\phi(x)}-p_{\phi(x)}\right|}{{{\orbitsize}}}>\frac{\epsilon}{2^{kn}}\right]\\&\leq \Pr_x\left[\frac{\left|\tilde{q}_{\phi(x)}-q_{\phi(x)}\right|}{\orbitsize}>\frac{\epsilon}{2}\cdot\frac{1}{2^{kn}}\right]+\Pr_x\left[\frac{\left|q_{\phi(x)}-p_{\phi(x)}\right|}{\orbitsize}>\frac{\epsilon}{2}\cdot\frac{1}{2^{kn}}\right]
\\  & \leq \Pr_x\left[\frac{q_{\phi(x)}}{\orbitsize}>\frac{j}{2^{kn}}\right]\\&+\Pr\left[\frac{|\tilde{q}_{\phi(x)}-q_{\phi(x)}|}{\orbitsize}>\gamma\cdot \frac{q_{\phi(x)}}{\orbitsize}\right]+\Pr_x\left[\Delta_x>\frac{\epsilon}{2}\cdot\frac{1}{2^{kn}}\right]
\\ & \leq\frac{1}{j}+\frac{1}{2^{n}}+\frac{\delta}{4}
\\ & \leq \frac{\delta}{2}+\frac{1}{2^{n}}\leq \delta.
\end{align*}

\end{proof}
\section{Putting it All Together}\label{sec-summary}
In this section we put our results in perspective and conclude.

As mentioned before, our goal is to find a class of distributions $\{\mathcal{D}_{n}\}_{\rangeovern}$ that can be sampled exactly in $\poly(n)$ time on a Quantum Computer, with the property that there does not exist a (classical) Sampler relative to that class of distributions, $\{\mathcal{D}_n\}_{\rangeovern}$.  

Using the results in Sections \ref{sec-classical-nonsquash} and \ref{sec-non-squash} we can quantumly sample from a class of distributions $\{\disttwo\}_{\rangeovern} $, where $k\in\poly(n)$ with the property that, if there exists a classical Sampler relative to this class of distributions, there exists an $\epsilon\var$-additive $\delta$-average case solution to the $Q^2$ function with respect to the $\mathcal{B}(0,k)^n$ distribution.  If we had an efficient decomposition for the ``Squashed QFT" unitary matrix, we could use the results from Sections \ref{sec-quantumsquash} and \ref{sec-classicalsquash} to make $k$ as large as $\exp(n)$.    We would like this to be an infeasible proposition, and so we conjecture:

\begin{conjecture} There exists some Efficiently Specifiable polynomial $Q$ with $n$ variables, so that $\epsilon\var$-additive $\delta$-average case solutions with respect to $\mathcal{B}(0,k)^n$, for any fixed $k<exp(n)$, to $Q^2$, cannot be computed in (classical) randomized $\poly(n,1/\epsilon,1/\delta)$ time with a $\PH$ oracle.
\end{conjecture}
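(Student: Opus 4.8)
Since this is stated as a conjecture, what follows is the route I would attempt and the point at which I expect it to stall, not a finished argument. I would prove the contrapositive: if such an approximate average-case solver existed within the polynomial hierarchy, then $\PH$ would collapse. Combining the reduction sketched in the introduction with Sections \ref{sec-classical-nonsquash}--\ref{sec-non-squash}, an approximate Sampler for $\disttwo$ already hands us an $\epsilon\var$-additive approximate $\delta$-average-case solution to $Q^2$ over $\mathcal{B}(0,k)^n$ computable in randomized polynomial time with an $\NP$ oracle; so, by Toda's theorem \cite{toda}, it would be enough to show that \emph{any} such solver lets us compute a genuinely $\sharpP$-hard polynomial --- the Permanent of a $0/1$ matrix, say, or the Hamiltonian-cycle polynomial --- exactly in the worst case, in randomized polynomial time with a $\PH$ oracle. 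In short, the task is to build a \emph{robust random self-reduction} for some Efficiently Specifiable polynomial $Q$.

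First I would fix a worst-case target $a\in\mathbb{Z}^n$ for $Q$, choose a random direction $v$, and study the univariate restriction $g(t)=Q(a+tv)$, a polynomial of degree $d=\deg Q$. Its square $g(t)^2$ has degree $2d$, so $2d+1$ evaluations determine it; extracting a polynomial square root (well-defined up to a global sign, which is pinned down since the Permanent of a $0/1$ matrix is nonnegative) and evaluating at the parameter that hits $a$ recovers $Q(a)$. Three points must be forced to work. (i) We hold only an \emph{additive} approximation, correct on a $1-\delta$ fraction of sampled points, so exact interpolation must give way to Berlekamp--Welch-style decoding, to absorb the $\delta$-fraction of arbitrary errors, together with a stability estimate --- in the spirit of Paturi's bound on the growth of bounded low-degree polynomials --- controlling how additive noise at the sampled parameters propagates to the reconstructed value. (ii) The sampled points $a+tv$ must be distributed as (a close approximation to) $\mathcal{B}(0,k)^n$, not uniformly over a large field; this calls for an Aaronson--Arkhipov-style ``hiding'' construction of the line, and since $\mathcal{B}(0,k)$, rescaled to unit variance, converges to a standard Gaussian as $k$ grows, one may hope to transplant their Gaussian hiding argument to this discrete, bounded setting at the cost of an inverse-polynomial statistical error that the slack in $\delta$ can absorb. (iii) The whole reduction must stay inside the region on which the average-case solver is promised to behave.

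The main obstacle is the tension between (i) and (ii). An honest interpolation-based self-reduction must extrapolate the degree-$d$ restriction from parameter values lying \emph{inside} the support of the sampling distribution out to the parameter hitting the worst-case point $a$, and extrapolating a degree-$d$ polynomial even a constant relative distance beyond its sampled range inflates additive error by a factor exponential in $d=\Theta(n)$, which is fatal. This is exactly the gap Aaronson and Arkhipov \cite{boson} were unable to close for the Permanent of Gaussians, and in the regime where $k$ approaches $\exp(n)$ the present conjecture, specialized to the Permanent, essentially subsumes their Permanent-of-Gaussians conjecture; so no easy proof should be expected. Closing it seems to require a genuinely new device: a reduction that never extrapolates outside the support of $\mathcal{B}(0,k)^n$, or a self-correction scheme that leans on the combinatorial structure of one particular $Q$.

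Finally, since the conjecture only asks for \emph{some} Efficiently Specifiable $Q$, the fallback I would keep in reserve is to abandon the Permanent and engineer a bespoke polynomial that is at once $\sharpP$-hard and equipped with a strong, noise-tolerant random self-reduction --- the local list-decoding viewpoint of Sudan, Trevisan and Vadhan, tuned so the decoding queries land on $\mathcal{B}(0,k)^n$-typical points while the monomial-support function stays efficiently computable and invertible, so that $Q$ remains Efficiently Specifiable in the sense of Definition \ref{efficientlyspecifiable}. The real difficulty there is reconciling the rigid syntactic form of an Efficiently Specifiable polynomial ($\{0,1\}$ coefficients, efficiently indexable monomials) with the algebraic redundancy that robust decoding demands; I expect that reconciliation, rather than any single calculation, to be where the work lies.
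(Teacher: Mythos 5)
This statement is a \emph{conjecture}, and the paper does not prove it: the authors explicitly say ``at present, both of these conjectures seem out of reach, because we do not have an example of a polynomial that is $\sharpP$-hard to approximate (in either multiplicative or additive) on average, in the sense that we need.'' You correctly recognize this up front and give a sketch-and-obstacle account rather than a purported proof, which is the honest move.

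Your sketch is the same natural route the literature suggests (a noise-tolerant random self-reduction along a random low-degree curve, decoded Berlekamp--Welch-style), and the obstacles you isolate are the right ones: the need to hide the worst-case instance among $\mathcal{B}(0,k)^n$-typical points, and the exponential blow-up of additive error when extrapolating a degree-$\Theta(n)$ restriction beyond the sampled range. This matches the framing in Aaronson--Arkhipov \cite{boson} for Permanent-of-Gaussians, and you correctly observe that for $k$ near $\exp(n)$ the present conjecture, specialized to the Permanent, essentially contains that one (since $\mathcal{B}(0,k)$ rescaled approaches a Gaussian). One small point worth flagging: the hardness target $Q^2$ with a $\PH$ oracle only needs to reach $\sharpP$-hardness to trigger Toda's collapse, so pinning the sign of the polynomial square root is not actually needed for the contradiction --- approximating $Q^2$ itself already suffices, which mildly simplifies (but does not rescue) your plan. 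Your closing observation, that the rigid syntactic constraints of ``Efficiently Specifiable'' ($\{0,1\}$ coefficients, efficiently indexable monomial support) sit in tension with the algebraic redundancy robust decoding wants, is a genuine insight the paper does not articulate, and is arguably the sharpest way to see why the extra generality over the Permanent has not yet bought a proof. In short: no gap to report against the paper, because the paper proves nothing here; your assessment of why it remains open is accurate.
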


At the moment we don't know of such a decomposition for the ``Squashed QFT".  However, we do know that we can classically evaluate a related fast (time $n\log^2{n}$)  polynomial transform by a theorem of Driscoll, Healy, and Rockmore \cite{dhr}.  We wonder if there is some way to use intuition gained by the existence of this fast polynomial transform to show the existence of an efficient decomposition for our ``Squashed QFT".

Additionally, if we can prove the Anti-Concentration Conjecture (Conjecture \ref{anticon}) relative to some Efficiently Specifiable polynomial $Q$ and the $\mathcal{B}(0,k)^n$ distribution, we appeal to Theorem \ref{thm-anticon} to show that it suffices to prove:

\begin{conjecture} There exists some Efficiently Specifiable polynomial $Q$ with $n$ variables, so that $Q$ satisfies Conjecture \ref{anticon} relative to $\mathcal{B}(0,k)^n$, for $k\leq \exp(n)$, and $\epsilon$-multiplicative $\delta$-average case solutions, with respect to $\mathcal{B}(0,k)^n$, to $Q^2$ cannot be computed in (classical) randomized $\poly(n,1/\epsilon,1/\delta)$ time with a $\PH$ oracle.
\end{conjecture}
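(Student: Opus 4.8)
\emph{A proof strategy.}\quad The plan is to exhibit a single witnessing polynomial; the obvious candidate is $Q=\Per$, the Permanent of an $n\times n$ matrix, viewed as a homogeneous multilinear degree-$n$ polynomial in its $n^{2}$ entries and Efficiently Specifiable via the standard encoding of permutations (the Hamiltonian-cycle polynomial is an equally good candidate: also Efficiently Specifiable and $\sharpP$-hard). For such a $Q$ the conjecture splits into two essentially independent tasks: (i) prove the Anti-Concentration Conjecture \ref{anticon} for $|Q|^{2}$ under $\mathcal{B}(0,k)^{n}$; and (ii) prove that a randomized $\poly(n,1/\epsilon,1/\delta)$-time algorithm with a $\PH$ oracle computing an $\epsilon$-multiplicative $\delta$-average-case solution to $Q^{2}$ under $\mathcal{B}(0,k)^{n}$ would let one recover arbitrary Permanents in $\BPP^{\PH}$, placing a $\sharpP$-hard problem in $\BPP^{\PH}$ and hence collapsing $\PH$ by Toda's theorem \cite{toda}. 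Theorem \ref{thm-anticon} is exactly what lets task (i) absorb the gap between additive and multiplicative error in that chain of implications.

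For task (i) I would run a moment argument in the spirit of the Tao--Vu estimate quoted above, exploiting the fact that the permanent is linear in each row with coefficients that are themselves permanents, so that conditioning on all but one row exhibits $Q(X)$ as a signed combination (with $\mathcal{B}(0,k)$-distributed signs) of smaller permanents; iterating this and tracking the lower tail of $|Q(X)|^{2}$ should yield anti-concentration. The step that is genuinely missing is upgrading the conclusion from a single fixed failure probability --- the $n^{-0.1}$ of Tao--Vu --- to an arbitrary inverse polynomial $\delta$ with the anti-concentration radius degrading only polynomially in $1/\delta$, which I would attack either by a sharper decoupling of the permanent expansion or by replacing a sub-Gaussian concentration inequality with a polynomial one. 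I note that for $k$ large (polynomial in $n$, or larger, up to $\exp(n)$) the distribution $\mathcal{B}(0,k)^{n}$ is, after rescaling, statistically close to a product of independent Gaussians, so in that regime task (i) coincides with the anti-concentration statement Aaronson and Arkhipov conjecture for Gaussian matrices and a proof there would transfer.

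For task (ii) the plan is the classical worst-case-to-average-case self-reduction of the Permanent: given a worst-case $0/1$ matrix $M$, pick a random $R$, set $g(t)=Q(M+tR)^{2}$ --- a univariate polynomial of degree $\le 2n$ in $t$ with $g(0)=\Per(M)^{2}$ --- evaluate $g$ at $\Theta(n/\epsilon)$ values of $t$ by calling the assumed average-case approximator (boosting its success probability to $1-2^{-n}$ by independent repetition and a median), and reconstruct $g$, hence $\Per(M)^{2}$ and (since $M$ is $0/1$) $\Per(M)$, by a noise-tolerant interpolation that simultaneously corrects the few query points on which the approximator fails (in the manner of Berlekamp--Welch decoding of Reed--Solomon codes) and absorbs the multiplicative slack using the lower bound on $|g(t)|$ supplied by task (i). The obstacle --- and the reason this is only a conjecture --- is that the average-case hypothesis applies only to query points distributed (statistically close to) $\mathcal{B}(0,k)^{n}$, whereas a generic affine line $t\mapsto M+tR$ drifts off the near-spherical support of $\mathcal{B}(0,k)^{n}$ almost at once; one has to replace the line by a curve along which $\mathcal{B}(0,k)^{n}$ is (approximately) invariant --- the discrete analogue of Aaronson and Arkhipov's rotation $X\cos\theta+Y\sin\theta$ for Gaussians --- and then control how the multiplicative errors propagate through an interpolation of degree linear in $n$. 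Making these two demands compatible is precisely the gap the paper flags as open; it is the exact analogue, over $\mathcal{B}(0,k)^{n}$ and for Efficiently Specifiable polynomials, of Aaronson and Arkhipov's still-unproven ``Permanent-of-Gaussians'' hardness conjecture, and I do not expect to close it without a genuinely new robust worst-case-to-average-case reduction for the Permanent over such distributions.
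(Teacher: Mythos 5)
This statement is a conjecture, not a theorem: the paper supplies no proof, and indeed says explicitly in Section \ref{sec-summary} that ``at present, both of these conjectures seem out of reach, because we do not have an example of a polynomial that is $\sharpP$-hard to approximate (in either multiplicative or additive) on average, in the sense that we need.'' So there is nothing in the paper to compare your argument against, and you were right not to claim a proof. Your write-up is in fact a faithful expansion of the evidence the paper itself adduces: the candidate witnesses (Permanent, Hamiltonian Cycle, both shown Efficiently Specifiable in Appendix \ref{chap-efficientlyspecifiable}); the split into an anti-concentration half (Conjecture \ref{anticon}) and a multiplicative average-case hardness half, glued by Theorem \ref{thm-anticon}; the Tao--Vu bound as near-miss evidence for the first half, with the same deficiency the paper notes (a fixed $n^{-0.1}$ rather than an arbitrary inverse polynomial $\delta$); the Gaussian limit of $\mathcal{B}(0,k)$ for large $k$ connecting to Aaronson--Arkhipov's conjecture; and Toda's theorem to convert a $\BPP^{\PH}$ algorithm for a $\sharpP$-hard quantity into a $\PH$ collapse.

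One small caution on framing. The conjecture as stated does not literally assert $\sharpP$-hardness; it asserts non-computability in randomized polynomial time with a $\PH$ oracle. The paper says it would ``be happy to prove'' $\sharpP$-hardness, since Toda then gives the stated consequence, but the conjecture is formally weaker than $\sharpP$-hardness and could in principle be established by some other route. Your task (ii) treats ``place a $\sharpP$-hard problem in $\BPP^{\PH}$'' as the target, which is a sufficient but not necessary road to the stated conclusion; worth keeping the distinction in mind. Beyond that, your diagnosis of the central obstruction --- that the Lipton/Berlekamp--Welch style self-reduction queries points on an affine line that immediately leaves the support of $\mathcal{B}(0,k)^{n}$, so one needs a distribution-preserving curve and an error-propagation analysis through degree-$\Theta(n)$ interpolation under multiplicative noise --- is exactly the gap that keeps this (and Aaronson--Arkhipov's parallel conjecture) open, and correctly identifies why no proof exists to be reproduced here.
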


We would be happy to prove that either of these two solutions (additive or multiplicative) are $\sharpP$-hard.  In this case we can simply invoke Toda's Theorem \cite{toda} to show that such a randomized classical solution would collapse $\PH$ to some finite level.

We note that at present, both of these conjectures seem out of reach, because we do not have an example of a polynomial that is $\sharpP$-hard to approximate (in either multiplicative or additive) on average, in the sense that we need.  Hopefully this is a consequence of a failure of proof techniques, and can be addressed in the future with new ideas.
\bibliographystyle{alpha}

\bibliography{fefferman}
\appendix

\section{The Power of Exact Quantum Sampling}\label{chapter-exact}
For the sake of completeness, in this section we prove a folklore result (that is implicit in e.g., \cite{ScottSharp}) showing that, unless the $\PH$ collapses to a finite level, there is a class of distributions that can be sampled efficiently on a Quantum Computer, that cannot be sampled exactly classically.  

Note that as a consequence of Theorem \ref{stockmeyer}, given an efficiently computable $f:\{0,1\}^n\rightarrow\{0,1\}$ we can compute a multiplicative approximation to $\Pr_{x\sim U_{\{0,1\}^n}}[f(x)=1]=\frac{\sum\limits_{x\in\{0,1\}^n}f(x)}{2^{n}}$ in the $\PH$.

Now we show the promised class of quantumly sampleable distributions:

\begin{definition}[$\mathcal{D}_{f}$]
Given $f:\{0,1\}^n\rightarrow\{\pm 1\}$, we define the distribution $\mathcal{D}_{f}$ over $\{0,1\}^n$ as follows:
$$\Pr_{\mathcal{D}_{f,n}}\left[y\right]=\frac{\left(\sum\limits_{x\in\{0,1\}^n}(-1)^{\langle x,y\rangle}f(x)\right)^2}{2^{2n}}$$
\end{definition}
The fact that this is a distribution will follow from the proceeding discussion.
\begin{theorem}
For all efficiently computable $f:\{0,1\}^n\rightarrow\{\pm 1\}$ we can sample from $\mathcal{D}_{f}$ in $\poly(n)$ time on a Quantum Computer.
\end{theorem}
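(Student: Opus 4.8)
The plan is to run exactly the Quantum Fourier Sampling template used in Section~\ref{sec:preliminaries} and Section~\ref{algorithm}, specialized to the group $\mathbb{Z}_2^n$, for which the Quantum Fourier Transform is just the Hadamard transform $H^{\otimes n}$. So the theorem should be immediate once the three standard ingredients are lined up.

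First I would prepare the uniform superposition $\frac{1}{\sqrt{2^n}}\sum_{x\in\zo^n}|x\rangle$ by applying $H^{\otimes n}$ to $\zeros$. Next, using the two-query phase-preparation procedure recalled in the Quantum Preliminaries (query $f$ into an ancilla, apply a single-qubit phase flip controlled on that ancilla, then uncompute $f$ with a second query), and using that an efficiently classically computable $f$ is efficiently reversibly computable, I obtain $\frac{1}{\sqrt{2^n}}\sum_{x\in\zo^n}f(x)|x\rangle$ in $\poly(n)$ time. Then I apply $H^{\otimes n}$ once more, which is the QFT over $\mathbb{Z}_2^n$, producing
\[
\frac{1}{2^n}\sum_{y\in\zo^n}\left(\sum_{x\in\zo^n}(-1)^{\langle x,y\rangle}f(x)\right)|y\rangle .
\]
Finally I measure all $n$ qubits in the computational basis: the probability of observing $y$ is the squared magnitude of its amplitude, namely $\bigl(\sum_{x}(-1)^{\langle x,y\rangle}f(x)\bigr)^2/2^{2n}=\Pr_{\mathcal{D}_{f,n}}[y]$, which is the claimed sampler. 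Since the pre-measurement state is a unit vector (the image of $\zeros$ under a composition of unitaries), these probabilities sum to $1$, which simultaneously discharges the preceding remark that $\mathcal{D}_{f,n}$ is a bona fide probability distribution.

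There is essentially no obstacle: every step is a standard primitive, and the total circuit has $\poly(n)$ gates. The only point needing (minor) care is the middle step — writing the sign $f(x)$ onto the amplitude without leaving an entangled garbage register — and this is exactly what the reversible evaluate/phase/uncompute trick of Section~\ref{sec:preliminaries} accomplishes. (Note this is strictly simpler than Theorem~\ref{essampling}: there one needed Lemma~\ref{uniformfunction} to build a uniform superposition over the structured set of monomial indices, whereas here the relevant index set is all of $\zo^n$, so plain Hadamards suffice.)
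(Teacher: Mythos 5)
Your proposal is correct and takes essentially the same route as the paper: prepare the uniform superposition, write $f(x)$ into the phase via the two-query evaluate/flip/uncompute trick, apply $H^{\otimes n}$, and measure. The added remark that unitarity of the circuit certifies $\mathcal{D}_f$ is a genuine probability distribution is a nice touch but not a departure from the paper's argument.
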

\begin{proof}
Consider the following quantum algorithm:
\begin{enumerate}
\item{Prepare the uniform superposition over $n$ qubits, $\frac{1}{2^{n/2}}\sum\limits_{x\in\{0,1\}^n}|x\rangle$}
\item{Since by assumption $f$ is efficiently computable, we can apply $f$ to the phases (as discussed in Section \ref{sec:preliminaries}), with two quantum queries to $f$ resulting in: $$|f\rangle=\frac{1}{2^{n/2}}\sum\limits_{x\in\{0,1\}^n}f(x)|x\rangle$$}
\item{Apply the $n$ qubit Hadamard, $H^{\otimes n}$}
\item{Measure in the standard basis}
\end{enumerate}
Note that $H^{\otimes n}|f\rangle=\frac{1}{2^n}\sum\limits_{y\in\{0,1\}^n}\sum\limits_{x\in\{0,1\}^n}(-1)^{\langle x,y\rangle}f(x)|y\rangle$ and therefore the distribution sampled by the above quantum algorithm is $\mathcal{D}_{f}$.  
\end{proof}
As before, the key observation is that $\left(\langle 00...0|H^{\otimes n}|f\rangle\right)^2=\frac{\left(\sum\limits_{x\in\{0,1\}^n}f(x)\right)^2}{2^{2n}}$, and therefore encodes a $\sharpP$-hard quantity in an exponentially small amplitude.  We can exploit this hardness classically if we assume the existence of a classical sampler, which we define to mean an efficient random algorithm whose output is distributed via this distribution.

\begin{theorem}[Folklore, e.g., \cite{ScottSharp}]\label{exact}
Suppose we have a classical randomized algorithm $B$, which given as input $0^n$, samples from $\mathcal{D}_{f}$ in time $\poly(n)$, then the $\PH$ collapses to $\BPP^{\NP}$.
\end{theorem}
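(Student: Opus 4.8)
The plan is to use the exact sampler $B$ to drag a $\sharpP$-hard quantity down into $\BPP^{\NP}$, and then close the hierarchy with Toda's theorem. Since $B$ samples $\mathcal{D}_f$ \emph{exactly}, we have $\Pr[B(0^n)=0^n]=\Pr_{\mathcal{D}_f}[0^n]=\bigl(\tfrac{1}{2^n}\sum_x f(x)\bigr)^2$; writing $f=(-1)^g$ for an efficiently computable $g:\zo^n\to\zo$ this equals $\bigl(1-2^{1-n}\#g\bigr)^2$, where $\#g=|g^{-1}(1)|$. The first step is to observe that the map $r\mapsto B(0^n;r)$ on $B$'s random string $r$ is efficiently computable, so Theorem \ref{stockmeyer} applied to it (with target string $0^n$) yields, in randomized time $\poly(n,1/\epsilon)$ with an $\NP$ oracle, a multiplicative $\epsilon$-approximation $\alpha$ to $\Pr[B(0^n)=0^n]$, hence to $\bigl(2^n-2\#g\bigr)^2$, hence to the ``gap'' $|2^n-2\#g|$.

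The second step converts this multiplicative estimate into an exact decision by choosing the instance appropriately. With $\epsilon=\tfrac12$, the estimate already separates $\sum_x f(x)=0$ from $\sum_x f(x)\neq0$: if the gap is $0$ then $\alpha=0$, while if it is nonzero then $\Pr[B(0^n)=0^n]\ge 2^{-2n}$ and so $\alpha\ge 2^{-2n-1}>0$. Instantiating $g$ as the standard encoding of the $\mathbf{C}_{=}\mathbf{P}$-complete problem ``does $\phi$ have exactly $2^{n-1}$ satisfying assignments?'' thus puts that problem in $\BPP^{\NP}$; more generally, taking $\epsilon$ inverse-polynomial pins down $|2^n-2\#g|$ exactly whenever $\#g$ is promised to lie within $\poly(n)$ of $2^{n-1}$.

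The third step is to amplify this into a genuine collapse at the right level. The cleanest route is the postselection viewpoint: an exact classical sampler lets $\mathbf{PostBPP}$ simulate $\mathbf{PostBQP}$, so by Aaronson's $\mathbf{PostBQP}=\PolyP$ together with $\mathbf{PostBPP}\subseteq\BPP^{\NP}$ we get $\PolyP\subseteq\BPP^{\NP}$; alternatively one bootstraps the exact-counting primitive of the previous paragraph with an $\NP$ search for the count to obtain the threshold problem ``$\#\phi\ge t$'' and hence $\PolyP$. Either way, since $\BPP^{\NP}$ is closed under polynomial-time Turing reductions (amplify each oracle call to failure probability $2^{-n^2}$ and union-bound over the $\poly(n)$ calls), Toda's theorem gives $\PH\subseteq\Poly^{\PolyP}\subseteq\Poly^{\BPP^{\NP}}=\BPP^{\NP}$.

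I expect the main obstacle to be precisely the bridge in steps two and three: Stockmeyer delivers only a \emph{multiplicative} approximation to the output probability, which on its own determines the underlying $\sharpP$-type quantity only up to a multiplicative factor and is therefore useless for recovering a large count exactly. The workaround is to feed in instances where the relevant quantity (the gap $|2^n-2\#g|$, equivalently $|Q|$ in the language of the rest of the paper) is \emph{forced} to be polynomially small, so that multiplicative accuracy becomes exact accuracy; arranging the encoding so that the hard problem one lands on is strong enough (ultimately $\PolyP$-hard, not merely $\mathbf{C}_{=}\mathbf{P}$-hard) and making sure the consequence lands in $\BPP^{\NP}$ rather than at a higher level of $\PH$ is the delicate point, which is exactly where the identification $\mathbf{PostBQP}=\PolyP$ does the real work.
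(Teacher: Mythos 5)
Your proposal takes the same top-level route as the paper: apply Stockmeyer (Theorem~\ref{stockmeyer}) to the map $r\mapsto B(0^n;r)$ with target $0^n$ to get a multiplicative estimate of $\Pr_{\mathcal{D}_f}[0^n]=\bigl(2^{-n}\sum_x f(x)\bigr)^2$ in $\BPP^{\NP}$, argue this is $\sharpP$-hard, and invoke Toda to collapse $\PH$ to $\BPP^{\NP}$. The difference is in the middle step: the paper simply cites Aaronson (\cite{ScottSharp}) for the fact that a relative-error estimate of $\bigl(\sum_x f(x)\bigr)^2$ is $\sharpP$-hard and does not re-derive it, whereas you try to reconstruct that hardness yourself. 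Your reconstruction correctly isolates the real subtlety --- a multiplicative estimate of a squared gap only reveals whether the gap is zero --- but two of your bridges are asserted more confidently than the hypothesis supports. First, ``an exact classical sampler lets $\mathbf{PostBPP}$ simulate $\mathbf{PostBQP}$'' does not follow directly, since $B$ samples only the one-Hadamard-layer distribution $\mathcal{D}_f$, not the output of an arbitrary postselected quantum circuit; you would need to check that Aaronson's $\PolyP\subseteq\mathbf{PostBQP}$ construction can be massaged into a state of this restricted form. Second, ``an $\NP$ search for the count'' is not directly available: your zero-detector is a randomized $\BPP^{\NP}$ procedure, not a deterministic predicate an $\NP$ machine can existentially quantify over; the clean repair is to observe that zero-detection places $\mathbf{C}_{=}\mathbf{P}$ (equivalently its complement) in $\BPP^{\NP}$ and then appeal to a Toda/Tarui/Green-type inclusion such as $\PH\subseteq \mathrm{BP}\cdot\mathbf{C}_{=}\mathbf{P}$, rather than to an $\NP$ search. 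None of this affects the soundness of the overall strategy --- it is precisely the content the paper outsources to the citation --- but your argument would need these two repairs to be self-contained.
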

\begin{proof}

The proof follows by applying Theorem \ref{stockmeyer} to obtain an approximate count to the fraction of random strings $r$ so that $B(0^n,r)=00..0$.  Formally, we can output an $\alpha$ so that:
$$\left(1-\epsilon\right)\frac{\left(\sum\limits_{x\in\{0,1\}^n}f(x)\right)^2}{2^{2n}}\leq \alpha \leq \frac{\left(\sum\limits_{x\in\{0,1\}^n}f(x)\right)^2}{2^{2n}}\left(1+\epsilon\right)$$ In time $\poly(n,1/\epsilon)$ using an $\NP$ oracle.  Multiplying through by $2^{2n}$ allows us to get a multiplicative approximation to $\left(\sum\limits_{x\in\{0,1\}^n}f(x)\right)^2$ in the $\PH$.  It is clear that, given efficiently computable $f:\{0,1\}^n\rightarrow\{\pm 1\}$  computing $\sum\limits_{x\in\{0,1\}^n} f(x)$ is $\sharpP$-hard. Aaronson \cite{ScottSharp} has shown that even calculating this relative error estimate to $\left(\sum\limits_{x\in\{0,1\}^n} f(x)\right)^2$ is $\sharpP$-hard.  Since we know by Toda's Theorem \cite{toda}, $\PH\subseteq\Poly^{\sharpP}$, we now have that $\Poly^{\sharpP}\subseteq\BPP^{\NP}\Rightarrow\PH\subseteq\BPP^{\NP}$ leading to our theorem.  Note also that this theorem would hold even under the weaker assumption that the sampler is contained in $\BPP^{\PH}$.
\end{proof}
We end this Section by noting that Theorem \ref{exact} is extremely sensitive to the exactness condition imposed on the classical sampler, because the amplitude of the quantum state on which we based our hardness is only exponentially small.  Thus it is clear that by weakening our sampler to an ``approximate" setting in which the sampler is free to sample any distribution $Y$ so that the Total Variation distance $\|Y-\mathcal{D}_{f}\|\leq {1/\poly(n)}$ we no longer can guarantee any complexity consequence using the above construction.  Indeed, this observation makes the construction quite weak-- for instance, it may even be unfair to demand that any physical realization of this quantum circuit {\sl itself} samples exactly from this distribution!  In the preceding sections we are motivated by this apparent weakness and discuss the intractability of approximately sampling in this manner from quantumly sampleable distributions.

\section{Computation of the Variance of Efficiently Specifiable Polynomial}\label{variance-section}
In this section we compute the variance of the distribution  over $\mathbb{R}$ induced by an Efficiently Specifiable polynomial $Q$ with assignments to the variables chosen independently from the $\mathcal{B}(0,k)$ distribution.  We will denote this throughout the section by $\var$.  Recall, by the definition of Efficiently Specifiable, we have that $Q$ is an $n$ variate homogenous multilinear polynomial with $\{0,1\}$ coefficients.  Assume $Q$ is of degree $d$ and has $m$ monomials.  Let each $[-k,k]$ valued variable $X_i$ be independently distributed from $\mathcal{B}(0,k)$.

We adopt the notation whereby, for $j\in[m],l \in [d]$, $x_{j_l}$ is the $l$-th variable in the $j$-th monomial of $Q$.  

Using the notation we can express $Q(X_1,...,X_n)=\sum\limits_{j=1}^{m}\prod\limits_{l=1}^d X_{j_l}$.  By independence of these random variables and since they are mean $0$, it suffices to compute the variance of each monomial and multiply by $m$:

\begin{align}
\var=\Var\left[Q(X_1,...,X_n)\right] &=\E\left[\sum\limits_{j=1}^{m}\prod\limits_{l=1}^d X_{j_l}^2\right]=\sum\limits_{j=1}^{m}\E\left[\prod\limits_{l=1}^d X_{j_l}^2\right]\\
&=m\E\left[\prod\limits_{l=1}^d X_{1_l}^2\right]=m\prod\limits_{l=1}^d\E\left[X_{1_l}^2\right]\\&=m\left(\E\left[X_{1_1}^2\right]\right)^d
\end{align}

Now since these random variables are independent and identically distributed, we can calculate the variance of an arbitrary $X_{j_l}$ for any $j\in[m]$ and $l \in [d]$:
\begin{align}
\E[X_{j_l}^2] &= \frac{1}{2^{k}}\sum\limits_{i=0}^k\left[\left(k-2i\right)^2\binom{k}{i}\right]\\
\end{align}

Thus, the variance of $Q$ is: 
$$m\frac{1}{2^{kd}}\left(\sum\limits_{i=0}^k\left[\left(k-2i\right)^2\binom{k}{i}\right]\right)^d$$

It will be useful to calculate this variance in a different way, and obtain a simple closed form.  In this way we will consider the $k$-valued equivalent polynomial $Q'_k:\mathbb{T}_2^{nk}\rightarrow\mathbb{R}$ which is a sum of $m'=mk^d$ multilinear monomials, each of degree $d$.  As before we can write $Q'_k(X_1,...,X_{nk})=\sum\limits_{j=1}^{m'}\prod\limits_{l=1}^d X_{j_l}$.  Note that the uniform distribution over assignments in $\mathbb{T}_2^{kn}$ to $Q'_k$ induces $\mathcal{B}(0,k)^n$ over $[-k,k]^n$ assignments to $Q$.  By the same argument as above, using symmetry and independence of random variables, we have:

\begin{align}
\var=\Var[Q(X_1,X_2,...,X_{n})]=&
\Var[Q'_k(X_1,X_2,...,X_{nk})]\\&=m'\prod\limits_{l=1}^{d}\E\left[X_{1_l}^2\right]
\\&=m'\E\left[X_{1_1}^2\right]^d=1^d m'=m'=k^dm
\end{align}

\section{Examples of Efficiently Specifiable Polynomials}\label{chap-efficientlyspecifiable}
In this section we give two examples of Efficiently Specifiable polynomials.

\begin{theorem}
${\bf Permanent}\left(x_1,...,x_{n^2}\right)=\sum\limits_{\sigma\in S_{n}}\prod\limits_{i=1}^{n}x_{i,\sigma{\left(i\right)}}$ is Efficiently Specifiable.
\end{theorem}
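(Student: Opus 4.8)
The plan is to exhibit an efficiently computable, one-to-one function $h : [n!] \to \{0,1\}^{n^2}$ with efficiently computable inverse, such that $h(z)$ is exactly the indicator vector of the variables appearing in the $z$-th monomial of the Permanent. Since the monomials of ${\bf Permanent}$ are indexed by permutations $\sigma \in S_n$, and the monomial for $\sigma$ is $\prod_{i=1}^n x_{i,\sigma(i)}$, the variable $x_{i,j}$ (under the natural bijection between $[n^2]$ and $[n]\times[n]$) appears in this monomial iff $j = \sigma(i)$. So what I really need is an efficient bijection between $[n!]$ and $S_n$ whose inverse is also efficient; composing with the obvious "write down the permutation matrix" map then gives $h$.

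First I would invoke a standard combinatorial ranking/unranking scheme for permutations — the factorial number system (Lehmer code). Given an index $z \in \{0,1,\dots,n!-1\}$, compute its mixed-radix representation $z = \sum_{i=1}^{n} c_i \cdot (n-i)!$ with $0 \le c_i \le n-i$; this is done by $n$ successive divisions with remainder, all on integers of $O(n\log n)$ bits, hence $\poly(n)$ time. Then decode $(c_1,\dots,c_n)$ into a permutation $\sigma$ by the usual procedure of repeatedly selecting the $c_i$-th remaining element of $\{1,\dots,n\}$; again $\poly(n)$ time. Output $h(z) \in \{0,1\}^{n^2}$ as the $0/1$ string with a $1$ in position $(i,\sigma(i))$ for each $i$ and $0$ elsewhere. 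This is manifestly one-to-one since distinct $z$ give distinct Lehmer codes give distinct permutations give distinct indicator strings, and it has image exactly the set of $n\times n$ permutation matrices, which is exactly the set of monomial-variable-sets of ${\bf Permanent}$.

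For the inverse, given a string $y \in \{0,1\}^{n^2}$ that arises as some $h(z)$, read off the permutation $\sigma$ by finding, for each row $i$, the unique column $j$ with $y_{i,j}=1$; then recompute the Lehmer code $c_i$ of $\sigma$ (for each $i$, $c_i$ counts how many of the not-yet-used values are less than $\sigma(i)$) and return $z = \sum_i c_i (n-i)!$. All steps are $\poly(n)$. Finally I would check that with this $h$, the defining identity of Definition~\ref{efficientlyspecifiable} holds: $\sum_{z\in[n!]} \prod_{t=1}^{n^2} x_t^{h(z)_t} = \sum_{\sigma \in S_n}\prod_{i=1}^n x_{i,\sigma(i)} = {\bf Permanent}$, and note that ${\bf Permanent}$ is multilinear, homogeneous of degree $n$, with $\{0,1\}$ coefficients, so all hypotheses of the definition are met.

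The routine parts here are the correctness and running-time bookkeeping for ranking/unranking, which are classical. The only mild subtlety — and the place I'd be most careful — is confirming that the inverse need only be defined (and correct) on strings actually in the image of $h$, i.e. genuine permutation matrices; on such inputs recovering $\sigma$ and hence $z$ is unambiguous, so there is no real obstacle, just a need to state the domain of $h^{-1}$ cleanly.
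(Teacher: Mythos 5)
Your proposal is correct and matches the paper's proof essentially exactly: both use the factorial number system (Lehmer code) to give an efficiently computable, efficiently invertible bijection between $[n!]$ and $S_n$, and then compose with the permutation-matrix encoding to get $h$. The paper spells out the unranking/ranking loops a bit more procedurally, but the substance is identical.
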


\begin{proof}
 We note that it will be convenient in this section to index starting from $0$.  The theorem follows from the existence of an $h_{\bf Permanent}:[0,n!-1]\rightarrow\{ 0,1\}^{n^2}$ that efficiently maps the $i$-th permutation over $n$ elements to a string representing its obvious encoding as an $n\times n$ permutation matrix.  We will prove that such an efficiently computable $h_{\bf Permanent}$ exists and prove that its inverse, $h^{-1}_{\bf Permanent}$ is also efficiently computable. \\

The existence of $h_{\bf Permanent}$ follows from the so-called ``factorial number system" \cite{knuth}, which gives an efficient bijection that associates each number in $[0,n!-1]$ with a permutation in $S_n$.  It is customary to think of the permutation encoded by the factorial number system as a permuted sequence of $n$ numbers, so that each permutation is encoded in $n\log{n}$ bits.  However, it is clear that we can efficiently transform this notation into the $n \times n$ permutation matrix.

To go from an integer $j\in[0,n!-1]$ to its permutation we:
\begin{enumerate}
\item{Take $j$ to its ``factorial representation", an $n$ number sequence, where the $i$-th place value is associated with $(i-1)!$, and the sum of the digits multiplied by the respective place value is the value of the number itself.  We achieve this representation by starting from $(n-1)!$, setting the leftmost value of the representation to $j'=\lfloor\frac{j}{(n-1)!}\rfloor$, letting the next value be $\lfloor\frac{j-j'\cdot(n-1)!}{(n-2)!}\rfloor$ and continuing until $0$.  Clearly this process can be efficiently achieved and efficiently inverted, and observe that the largest each value in the $i$-th place value can be is $i$. }
\item{In each step we maintain a list $\ell$ which we think of as originally containing $n$ numbers in ascending order from $0$ to $n-1$.}
\item{Repeat this step $n$ times, once for each number in the factorial representation.  Going from left to right, start with the left-most number in the representation and output the value in that position in the list, $\ell$.  Remove that position from $\ell$.}
\item{The resulting $n$ number sequence is the encoding of the permutation, in the standard $n\log{n}$ bit encoding}
\end{enumerate}
 \end{proof}
Now we show that the Hamiltonian Cycle Polynomial is Efficiently Specifiable.

Given a graph $G$ on $n$ vertices, we say a Hamiltonian Cycle is a path in $G$ that starts at a given vertex, visits each vertex in the graph exactly once and returns to the start vertex.  

We define an $n$-cycle to be a Hamiltonian cycle in the complete graph on $n$ vertices.  Note that there are exactly $(n-1)!$ $n$-cycles in $S_n$.
  
\begin{theorem} ${\bf HamiltonianCycle}\left(x_1,...,x_{n^2}\right)=\sum\limits_{\sigma: \:n-cycle}\prod\limits_{i=1}^{n}x_{i,\sigma{\left(i\right)}}$ is Efficiently Specifiable.
\end{theorem}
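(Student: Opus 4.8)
The plan is to mimic the proof that ${\bf Permanent}$ is Efficiently Specifiable, exhibiting an efficiently computable bijection $h_{\bf HamiltonianCycle}:[0,(n-1)!-1]\rightarrow\{0,1\}^{n^2}$, together with an efficiently computable inverse, whose image is exactly the set of encodings of $n$-cycles as $n\times n$ permutation matrices. First note that ${\bf HamiltonianCycle}$ is a multilinear, homogeneous, degree-$n$ polynomial with $\{0,1\}$ coefficients, having $m=(n-1)!$ monomials (one per $n$-cycle), so by Definition \ref{efficientlyspecifiable} the only thing to verify is Efficient Specifiability, i.e., the existence of $h_{\bf HamiltonianCycle}$ and its inverse.

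The key structural fact is that fixing a canonical base vertex, say vertex $1$, yields a bijection between $n$-cycles and permutations of the remaining $n-1$ vertices: an $n$-cycle is uniquely specified by the order $1, v_2, v_3, \ldots, v_n$ in which it visits the vertices starting from $1$, and conversely every sequence $v_2,\ldots,v_n$ that is a permutation of $\{2,\ldots,n\}$ yields a distinct $n$-cycle. Thus $h_{\bf HamiltonianCycle}$ is obtained as the composition of three efficiently computable, efficiently invertible maps: (i) the factorial-number-system bijection $[0,(n-1)!-1]\rightarrow S_{n-1}$ from the proof that ${\bf Permanent}$ is Efficiently Specifiable, applied with $n-1$ in place of $n$, producing the visiting order $v_2,\ldots,v_n$ of $\{2,\ldots,n\}$; (ii) the map from this visiting order to the $n$-cycle $\sigma\in S_n$ defined by $\sigma(1)=v_2$, $\sigma(v_i)=v_{i+1}$ for $2\leq i<n$, and $\sigma(v_n)=1$; (iii) the standard encoding of $\sigma$ as the string in $\{0,1\}^{n^2}$ with a $1$ in position $(i,\sigma(i))$ and $0$ elsewhere.

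Each of these maps runs in polynomial time in both directions, which gives the inverse $h^{-1}_{\bf HamiltonianCycle}$: given a string in $\{0,1\}^{n^2}$, read off the permutation $\sigma$, verify it is a single $n$-cycle by walking the functional graph from vertex $1$ and checking that all $n$ vertices are visited before returning to $1$, recover the visiting order $v_2,\ldots,v_n$ along the way, and finally invert the factorial number system to recover the index in $[0,(n-1)!-1]$. Composing these bijections establishes that ${\bf HamiltonianCycle}$ is Efficiently Specifiable.

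I do not expect a genuine obstacle here; the proof is essentially the ${\bf Permanent}$ argument with $S_{n-1}$ replacing $S_n$ to account for the $(n-1)!$ distinct $n$-cycles. The only point requiring mild care is the inverse of step (iii): one must confirm that the given permutation matrix encodes a single $n$-cycle rather than a product of shorter cycles. But this is exactly the image of our bijection, and the single-cycle check above is clearly polynomial time, so this causes no difficulty.
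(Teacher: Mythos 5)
Your proof is correct and takes essentially the same approach as the paper: both reduce to the factorial-number-system bijection from $[0,(n-1)!-1]$ onto permutations of the remaining $n-1$ vertices (reusing the construction from the ${\bf Permanent}$ case), identify these canonically with $n$-cycles by fixing a base vertex, and then encode as $n\times n$ permutation matrices. Your explicit three-step composition, with the inverse obtained by walking the cycle from the base vertex, is a somewhat cleaner rendering of the same underlying bijection.
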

\begin{proof}
We can modify the algorithm for the Permanent above to give us an efficiently computable $h_{HC}:[0,(n-1)!-1]\rightarrow\{0,1\}^{n^2}$ with an efficiently computable $h_{HC}^{-1}$.

To go from a number $j\in[0,(n-1)!-1]$ to its $n$-cycle we:
\begin{enumerate}
\item{Take $j$ to its factorial representation as above.  Now this is an $n-1$ number sequence where the $i$-th place value is associated with $(i-1)!$, and the sum of the digits multiplied by the respective place value is the value of the number itself.}
\item{In each step we maintain a list $\ell$ which we think of as originally containing $n$ numbers in ascending order from $0$ to $n-1$.}
\item{Repeat this step $n-1$ times, once for each number in the factorial representation.  First remove the smallest element of the list.  Then going from left to right, start with the left-most number in the representation and output the value in that position in the list, $\ell$.  Remove that position from $\ell$.}
\item{We output $0$ as the $n$-th value of our $n$-cycle.}
\end{enumerate}
To take an $n$-cycle to a factorial representation, we can easily invert the process:  
\begin{enumerate}
\item{In each step we maintain a list $\ell$ which we think of as originally containing $n$ numbers in order from $0$ to $n-1$.}
\item{Repeat this step $n-1$ times.  Remove the smallest element of the list.  Going from left to right, start with the left-most number in the $n$-cycle and output the position of that number in the list $\ell$ (where we index the list starting with the $0$ position).  Remove the number at this position from $\ell$.}
\end{enumerate}

\end{proof}

\section{A Simple Example of ``Squashed" QFT, for $k=2$}
In this Section we explicitly construct the matrix $L\cdot \tilde{D}_2 \cdot R$ from the $QFT$ over $\mathbb{Z}_2^2$.  Note that the matrix we referred to as $D_2$ is:

\[ \left( \begin{array}{cccc}
1 & 1 & 1 & 1\\
1 & -1 & 1 & -1\\
1 & 1 & -1 & -1\\
1 & -1 & -1 & 1\end{array} \right)\]

Where we can think of the columns as identified with the monomials $\{1, x_1, x_2, x_1x_2\}$ in this order (from left to right) and the rows (from top to bottom) as identified with the assignments $\{(1,1),(-1,1),(1,-1),(-1,-1)\}$ where the first element in each pair is the assignment to $x_1$ and the second is to $x_2$.  Note that as desired, the $(i,j)$-th element of $D_2$ is the evaluation of the $j$-th monomial on the $i$-th assignment.\\

Now we create $\tilde{D}^{(1)}_2$ by combining columns of monomials that belong to each elementary symmetric polynomial, as described in the prior section.  We identify the columns with elementary symmetric polynomials on variables $x_1,x_2$ in order from left to right: $1,x_1+x_2,x_1x_2$ and the rows remain the same.  This gives us:

\[ \left( \begin{array}{ccc}
1 & 2 & 1\\
1 & 0 & -1\\
1 & 0 & -1\\
1 & -2 & 1\end{array} \right)\]

It can easily be verified that the columns are still orthogonal.  Now we note that the rows corresponding to assignments $(1,-1)$ and $(-1,1)$ are in the same orbit with respect to $S_2$ symmetry.  And thus we obtain $\tilde{D}_2$:
\[ \left( \begin{array}{ccc}
1 & 2 & 1\\
1 & 0 & -1\\
1 & -2 & 1\end{array} \right)\]

Now $L$ is the diagonal matrix whose $i$-th entry is $\sqrt{o_i}$, the size of the $i$-th equivalence class of assignments under $S_2$ symmetry.  Note that $|o_0|=\sqrt{\binom{2}{0}}=1$, $|o_1|=\sqrt{\binom{2}{1}}=\sqrt{2}$, and $|o_2|=\sqrt{\binom{2}{2}}=1$, and so $L$ is:
\[ \left( \begin{array}{ccc}
1 & 0 & 0\\
0 & \sqrt{2} & 0\\
0 & 0 & 1\end{array} \right)\]

And $L\cdot \tilde{D}_2=$
\[ \left( \begin{array}{ccc}
1 & 2 & 1\\
\sqrt{2} & 0 & -\sqrt{2}\\
1 & -2& 1\end{array} \right)\]

And we note that the columns are now orthogonal.  As before, this implies there exists a diagonal matrix $R$ so that $L\cdot \tilde{D}_2\cdot R$ is unitary.  It is easily verified that this is the matrix $R$:

\[ \left( \begin{array}{ccc}
\frac{1}{2} & 0 & 0\\
0 & \frac{1}{\sqrt{8}} & 0\\
0 & 0& \frac{1}{2}\end{array} \right)\]

And the first two elements $r_0,r_1$ can be easily seen to be $\frac{1}{\sqrt{2^k}}=\frac{1}{2}$ and $\frac{1}{\sqrt{\sum\limits_{i=0}^k\left[\binom{k}{i}\left(k-2i\right)^2\right]}}=\frac{1}{\sqrt{8}}$, as claimed in the prior section.  Thus the final $k+1 \times k+1$ matrix $L\cdot \tilde{D}_2\cdot R$ is:

\[ \left( \begin{array}{ccc}
\frac{1}{2} & \frac{2}{\sqrt{8}} & \frac{1}{2} \\
\frac{\sqrt{2}}{2} & 0 & -\frac{\sqrt{2}}{2}\\
\frac{1}{2} & -\frac{2}{\sqrt{8}}& \frac{1}{2}\end{array} \right)\]

Which is unitary, as desired.

\end{document}